\newtheorem{definition}{Definition}
\newtheorem{lem}{Lemma}
\newtheorem{thm}{Theorem}
\let\oldhat\hat
\renewcommand{\vec}[1]{\mathbf{#1}}
\renewcommand{\hat}[1]{\oldhat{\mathbf{#1}}}
\begin{document}

\title{Power Allocation over Two Identical Gilbert-Elliott Channels }
\author{\IEEEauthorblockN{Junhua Tang}
\IEEEauthorblockA{School of Electronic Information\\ and Electrical Engineering\\
Shanghai Jiao Tong University, China\\
Email: junhuatang@sjtu.edu.cn}
\and
\IEEEauthorblockN{Parisa Mansourifard}
\IEEEauthorblockA{Ming Hsieh Department \\
of Electrical Engineering\\
Viterbi School of Engineering \\
University of Southern California\\
Email: parisama@usc.edu}
\and
\IEEEauthorblockN{Bhaskar Krishnamachari}
\IEEEauthorblockA{Ming Hsieh Department \\
 of Electrical Engineering\\
Viterbi School of Engineering \\
University of Southern California\\
Email: bkrishna@usc.edu}
}
\maketitle

\begin{abstract}
We study the problem of power allocation over two identical Gilbert-Elliot communication channels. Our goal is to maximize the expected discounted number of bits transmitted over an infinite time horizon. This is achieved by choosing among three possible strategies: (1) betting on channel 1 by allocating all the power to this channel, which results in high data rate if channel 1 happens to be in good state, and zero bits transmitted if channel 1 is in bad state (even if channel 2 is in good state) (2) betting on channel 2 by allocating all the power to the second channel, and (3) a balanced strategy whereby each channel is allocated half the total power, with the effect that each channel can transmit a low data rate if it is in good state. We assume that each channel's state is only revealed upon transmission of data on that channel. We model this problem as a partially observable Markov decision processes (MDP), and derive key threshold properties of the optimal policy. Further, we show that by formulating and solving a relevant linear program the thresholds can be determined numerically when system parameters are known.
\end{abstract}


\section{Introduction}{\label{sec:intro}}

Adaptive power control is an important technique to select the transmission power of a wireless system according to channel condition to achieve better network performance in terms of higher data rate or spectrum efficiency \cite{yoo2006}, \cite{weiyu2004}. While there has been some recent work on power allocation over stochastic channels~\cite{zaidi05, wang10, gai12}, the problem of optimal adaptive power allocation across multiple stochastic channels with memory is challenging and poorly understood. In this paper, we analyze a simple but fundamental problem.  We consider a wireless system operating on two stochastically identical independent parallel transmission channels, each modeled as a slotted Gilber-Elliott channel (i.e. described by two-state Markov chains, with a bad state ``0'' and a good state ``1''). Our objective is to allocate the limited power budget to the two channels dynamically so as to maximize the expected discounted number of bits transmitted over time. Since the channel state is unknown when power allocation decision is made, this problem is more challenging than it looks like.

Recently, several works have explored different sequential decision-making problems involving Gilbert-Elliott channels~\cite{zhao2008}, \cite{Ahmad2009}, \cite{laourine2010}, \cite{yanting2012}, \cite{nayyar2011}. In~\cite{zhao2008}, \cite{Ahmad2009}, the authors consider selecting one channel to sense/access at each time among several identical channels, formulate it as a restless multi-armed problem, and show that a simple myopic policy is optimal whenever the channels are positively correlated over time.  In \cite{laourine2010}, the authors study the problem of dynamically choosing one of three transmitting schemes for a single Gilbert-Elliott channel in an attempt to maximize the expected discounted number of bits transmitted. And in \cite{yanting2012}, the authors study the problem of choosing a transmitting strategy from two choices emphasizing the case when the channel transition probabilities are unknown. While similar in spirit to these two studies, our work addresses a more challenging setting involving two independent channels. A more related two-channel problem is studied in \cite{nayyar2011}, which characterizes the optimal policy to opportunistically access two non-identical Gilber-Elliott channels (generalizing the prior work on sensing policies for identical channels~\cite{zhao2008}, \cite{Ahmad2009}). While we address only identical channels in this work, the strategy space explored here is richer because in our formulation of power allocation, it is possible to use both channels simultaneously whilst in \cite{zhao2008}, \cite{Ahmad2009}, \cite{nayyar2011} only one channel is accessed in each time slot.


In this paper, we formulate our power allocation problem as a partially observable Markov decision process (POMDP). We then treat the POMDP as a continuous state MDP and develop the structure of the optimal policy (decision). Our main contributions are the following: (1) we formulate the problem of dynamic power allocation over parallel Markovian channels, (2) using the MDP theory, we theoretically prove key threshold properties of the optimal policy for this particular problem, (3) through simulation based on linear programming, we demonstrate the existence of the 0-threshold and 2-threshold structures of the optimal policy, and (4) we demonstrate how to numerically compute the thresholds and construct the optimal policy when system parameters are known.

\section{Problem Formulation}

\subsection*{A. Channel model and assumptions}

We consider a wireless communication system operating on two parallel channels. Each channel is described by a slotted Gilbert-Elliott model which is a one dimensional Markov chain $G_{i,t} ( i \in \{1,2\}, t \in \{1,2,...,\infty\})$ with two states: a good state denoted by 1 and a bad state denoted by 0 ($i$ is the channel number and $t$ is the time slot). The channel transition probabilities are given by $Pr[G_{i,t}=1|G_{i,t-1}=1]=\lambda_1, i \in \{1,2\} $ and  $Pr[G_{i,t}=1|G_{i,t-1}=0]=\lambda_0, i \in \{1,2\} $. We assume the two channels are identical and independent of each other, and channel transitions occur at the beginning of each time slot. We also assume that $\lambda_0 \leq \lambda_1$, which is the positive correlation assumption commonly used in the literature.

The system has a total transmission power of $P$. At the beginning of time slot $t$, the system allocates transmission power $P_1(t)$ to channel 1 and $P_2(t)$ to channel 2, where $P_1(t)+P_2(t) = P$. We assume the channel state is not directly observable at the beginning of each time slot. That is, the system needs to allocate the transmission power to the two parallel channels without knowing the channel states. If channel $i (i \in \{1,2\})$ is used at time slot $t$ by allocating transmission power $P_i(t)$ on it, the channel state of the elapsed slot is revealed at the end of the time slot through channel feedback. But if a channel is not used, that is, if transmission power is 0 on that channel, the channel state of the elapsed slot remains unknown at the end of that slot.

\subsection*{B. Power allocation strategies}

To simplify the problem, we assume the system may allocate one of the following three power levels to a channel: $0, P/2$, or $P$. That is, based on the belief in the channel state of channel $i$ for the current time slot $t$, the system may decide to give up the channel ($P_i(t)=0$), use it moderately ($P_i(t)=P/2$) or use it fully($P_i(t)=P$). Since the channel state is not directly observable when the power allocation is done, the following circumstances may occur. If a channel is in bad state, no data is transmitted at all no matter what the allocated power is. If a channel is in good state, and  power $P/2$ is allocated to it, it can transmit $R_l$ bits of data successfully during that slot. If a channel is in good condition and  power $P$ is allocated to it, it can transmit $R_h$ bits of data successfully during that slot. We assume $R_l<R_h<2R_l$.

We define three power allocation strategies(actions): balanced, betting on channel 1, and betting on channel 2. Each strategy is explained in detail as follows.

\emph{Balanced:} For this action (denoted by $B_b$), the system allocates the transmission power evenly on both channels, that is, $P_1(t)=P_2(t)=P/2$, for time slot $t$. This corresponds to the situation when the system cannot determine which of the channels is more likely to be in good state, so it decides to ``play safe" by using both of the channels.

\emph{Betting on channel 1:} For this action (denoted by $B_1$), the system decides to ``gamble" and allocate all the transmission power to channel 1. That is, $P_1(t)=P, P_2(t)=0$ for time slot $t$.
This corresponds to the situation when the system believes that channel 1 is in a good state and channel 2 is in a bad state.

\emph{Betting on channel 2:} For this action (denoted by $B_2$), the system put all the transmission power in channel 2, that is, $P_2(t)=P, P_1(t)=0$ for time slot $t$.

Note that for strategies $B_1$ and $B_2$, if a channel is not used, the system (transmitter) will not acquire any knowledge about the state of that channel during the elapsed slot.
\subsection*{C. POMDP formulation}

At the beginning of a time slot, the system is confronted with a choice among three actions. It must judiciously select actions so as to maximize the total expected discounted number of bits transmitted over an infinite time span. Because the state of the channels is not directly observable, the problem in hand is a Partially Observable Markov Decision Process (POMDP). In \cite{smallwood1973}, it is shown that a sufficient statistic for determining the optimal policy is the conditional probability that the channel is in the good state at the beginning of the current slot given the past history (henceforth called belief) \cite{laourine2010}. Denote the belief of the system by a two dimension vector $\vec{x}_t$=$(x_{1,t},x_{2,t})$, where $x_{1,t}= \Pr [G_{1,t}=1|\hbar _t]$, $x_{2,t}= \Pr [G_{2,t}=1|\hbar _t]$, where $\hbar_t$ is all the history of actions and observations at the current slot $t$. By using this belief as the decision variable, the POMDP problem is converted into an MDP with the uncountable state space $([0,1],[0,1])$ \cite{laourine2010}.

Define a policy $\pi$ as a rule that dictates the action to choose, \emph{i.e.}, a map from the belief at a particular time to an action in the action space. Let $V^\pi(\vec{p})$ be the expected discounted reward with initial belief $\vec{p}=(p_1,p_2)$, that is, $x_{1,0}=\Pr [G_{1,0}=1|\hbar_0]=p_1$, $x_{2,0}=\Pr [G_{2,0}=1|\hbar_0]=p_2$, where the superscript $\pi$ denotes the policy being followed. Define $\beta (\in [0,1))$ as the discount factor, the expected discounted reward has the following expression
\begin{equation}
V^\pi(\vec{p})=E_\pi [ \sum_{t=0}^\infty \beta^t g_{a_t}(\vec{x}_t)|\vec{x}_0=\vec{p}],
\label{eq:vbpip}
\end{equation}
where $E_\pi$ represents the expectation given that the policy $\pi$ is employed, $t$ is the time slot index, $a_t$ is the action chosen at time $t$, $a_t \in \{B_b, B_1, B_2\}$. The term $g_{a_t}(\vec{x}_t)$ denotes the expected reward acquired when the belief is $\vec{x_t}$ and the action $a_t$ is chosen:
\[ g_{a_t}(\vec{x}_t)= \left \{
\begin{array}{rl}
x_{1,t}R_l+x_{2,t}R_l, & \quad \mbox{if} \quad  a_t = B_b\\
x_{1,t}R_h, & \quad \mbox{if} \quad   a_t = B_1\\
x_{2,t}R_h, & \quad \mbox{if} \quad   a_t = B_2\\
\end{array} .\right. \]
\begin{equation}
\end{equation}
Now we define the value function $V(\vec{p})$ as
\begin{equation}
V(\vec{p})=\max_\pi V^\pi (\vec{p}), \quad \mbox{for all} \quad \vec{p} \in ([0,1],[0,1]).
\label{eq:vbp}
\end{equation}
A policy is said to be stationary if it is a function mapping the state space $([0,1],[0,1])$ into the action space $\{B_b, B_1, B_2\}$. Ross proved in \cite{ross1970} (Th.6.3) that there exists a stationary policy $\pi^*$ such that $V(\vec{p})=V^{\pi^*}(\vec{p})$. The value function $V(\vec{p})$ satisfies the Bellman equation
\begin{equation}
V(\vec{p})=\max_{a \in \{B_b, B_1, B_2\}} \{ V_a(\vec{p})\},
\label{eq:bellman}
\end{equation}
where $V_a(\vec{p})$ is the value acquired by taking action $a$ when the initial belief is $\vec{p}$. $V_a(\vec{p})$ is given by
\begin{equation}
V_a(\vec{p})= g_a(\vec{p})+\beta E^\vec{y} [V(\vec{y})|\vec{x}_0=\vec{p}, a_0=a],
\label{eq:vbap}
\end{equation}
where $\vec{y}$ denotes the next belief when the action $a$ is chosen and the initial belief is $\vec{p}$. The term $V_a(\vec{p})$ is explained next for the three possible actions.

\emph{a) Balanced (action $B_b$):} If this action is taken, and the current belief is $\vec{p}=(p_1,p_2)$, the immediate reward is $p_1R_l+p_2R_l$. Since both channels are used, the channel quality of both channels during the current slot is then revealed to the transmitter. With probability $p_1$ the first channel will be in good state and hence the belief of channel 1 at the beginning of the next slot will be $\lambda_1$. Likewise, with probability $1-p_1$ channel 1 will turn out to be in bad state and hence the updated belief of channel 1 for the next slot is $\lambda_0$. Since channel 2 and channel 1 are identical, channel 2 has similar belief update. Consequently if action $B_b$ is taken, the value function evolves as
\begin{eqnarray}
\label{eq:vbbb}
& & V_{B_b}(p_1,p_2) \nonumber \\
&=&p_1R_l+p_2R_l+\beta[(1-p_1)(1-p_2)V(\lambda_0,\lambda_0)\nonumber \\
&+&p_1(1-p_2)V(\lambda_1,\lambda_0)+(1-p_1)p_2V(\lambda_0,\lambda_1) \nonumber\\
&+&p_1p_2V(\lambda_1,\lambda_1)].
\end{eqnarray}
\emph{b) Betting on channel 1( action $B_1$):} If this action is taken, and the current belief is $\vec{p}=(p_1,p_2)$, the immediate reward is $p_1R_h$. But since channel 2 is not used, its channel state remains unknown. Hence if the belief of channel 2 during the elapsed time slot is $p_2$, its belief at the beginning of the next time slot is given by
\begin{equation}
T(p_2)= p_2\lambda_1+(1-p_2)\lambda_0=\alpha p_2 + \lambda_0,
\end{equation}
where $\alpha=\lambda_1-\lambda_0$. Consequently, if this action is taken, the value function evolves as
\begin{eqnarray}
\label{eq:vbb1}
& &V_{B_1}(p_1,p_2)= p_1R_h+ \nonumber \\
& &\beta[(1-p_1)V(\lambda_0,T(p_2))+p_1V(\lambda_1,T(p_2))].
\end{eqnarray}
\emph{c) Betting on channel 2(action $B_2$):} Similar to action $B_1$, if action $B_2$ is taken, the value function evolves as
\begin{eqnarray}
\label{eq:vbb2}
& &V_{B_2}(p_1,p_2)= p_2R_h+ \nonumber \\
& &\beta[(1-p_2)V(T(p_1),\lambda_0)+p_2V(T(p_1),\lambda_1)],
\end{eqnarray}
where
\begin{equation}
T(p_1)=p_1\lambda_1+(1-p_1)\lambda_0=\alpha p_1 + \lambda_0.
\end{equation}
Finally the Bellman equation for our power allocation problem reads as follows
\begin{equation}
V(\vec{p})=\max \{V_{B_b}(\vec{p}),V_{B_1}(\vec{p}),V_{B_2}(\vec{p})\}.
\end{equation}


\section{Structure of the Optimal Policy}
From the above discussion we understand that an optimal policy exists for our power allocation problem. In this section, we try to derive the optimal policy by first looking at the features of its structure.
\subsection*{A: Properties of value function}
\begin{lem} \label{lem:linear}
$V_{B_i}(p_1,p_2), i \in \{1,2,b\}$ is affine with respect to $p_1$ and $p_2$ and the following equalities hold:
\begin{eqnarray}\label{eq:linear}
V_{B_i}(cp+(1-c)p',p_2)=cV_{B_i}(p,p_2)+(1-c)V_{B_i}(p',p_2), \nonumber \\
V_{B_i}(p_1,cp+(1-c)p')=cV_{B_i}(p_1,p)+(1-c)V_{B_i}(p_1,p'),
\end{eqnarray}
where $0 \leq c \leq 1$ is a constant; and we say $f(x)$ is affine with respect to $x$ if $f(x)=a+cx$, with constant $a$ and $c$.
\end{lem}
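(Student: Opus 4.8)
The plan is to prove the claim by structural induction using the value-iteration characterization of $V$. Recall that $V = \lim_{n\to\infty} V^{(n)}$, where $V^{(0)} \equiv 0$ and $V^{(n+1)}(\vec{p}) = \max\{V_{B_b}^{(n)}(\vec{p}), V_{B_1}^{(n)}(\vec{p}), V_{B_2}^{(n)}(\vec{p})\}$, with each $V_a^{(n)}$ defined by substituting $V^{(n)}$ in place of $V$ on the right-hand sides of \eqref{eq:vbbb}, \eqref{eq:vbb1}, \eqref{eq:vbb2}. The statement I actually want to prove is not that $V$ is affine — it is not — but rather that for each fixed action $a \in \{B_b, B_1, B_2\}$, the map $\vec{p} \mapsto V_a(\vec{p})$ is affine separately in $p_1$ (with $p_2$ held fixed) and in $p_2$ (with $p_1$ held fixed), i.e. it has the form $V_a(p_1,p_2) = A + B p_1 + C p_2 + D p_1 p_2$ as a function of the pair, which is exactly what the two displayed identities \eqref{eq:linear} encode.

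First I would inspect the right-hand sides directly, treating the four constants $V(\lambda_0,\lambda_0)$, $V(\lambda_1,\lambda_0)$, $V(\lambda_0,\lambda_1)$, $V(\lambda_1,\lambda_1)$ as fixed numbers. For action $B_b$, equation \eqref{eq:vbbb} expresses $V_{B_b}(p_1,p_2)$ as $R_l p_1 + R_l p_2$ plus $\beta$ times a convex combination of the four corner values with weights $(1-p_1)(1-p_2)$, $p_1(1-p_2)$, $(1-p_1)p_2$, $p_1 p_2$; expanding, this is manifestly of the form $A + Bp_1 + Cp_2 + Dp_1p_2$, hence affine in each argument when the other is fixed. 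For action $B_1$, equation \eqref{eq:vbb1} gives $R_h p_1 + \beta[(1-p_1)V(\lambda_0, T(p_2)) + p_1 V(\lambda_1, T(p_2))]$; here the dependence on $p_1$ is visibly affine (a convex combination with weights $1-p_1, p_1$, plus a linear term), and the dependence on $p_2$ enters only through $T(p_2) = \alpha p_2 + \lambda_0$, which is affine in $p_2$ — so one needs that $p_2 \mapsto V(\lambda_0, T(p_2))$ and $p_2 \mapsto V(\lambda_1, T(p_2))$ are affine. Action $B_2$ is symmetric. Thus the lemma reduces to showing that $V$, restricted to any horizontal or vertical line segment through the belief square, is affine — equivalently, that $q \mapsto V(\lambda_0, q)$ and $q \mapsto V(\lambda_1, q)$ are affine on $[0,1]$, together with their transposes.

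This is where the induction carries the weight. I would show by induction on $n$ that each $V^{(n)}$ is affine on every axis-parallel segment (equivalently has the form $A_n + B_n p_1 + C_n p_2 + D_n p_1 p_2$ on the whole square — note the bilinear form is preserved under the operations above). The base case $V^{(0)}\equiv 0$ is trivial. For the inductive step, assuming $V^{(n)}$ has this bilinear form, the computations of the previous paragraph show each $V_a^{(n)}$ also has this bilinear form. The subtlety is the $\max$: a pointwise maximum of three bilinear functions is generally only piecewise-bilinear, not bilinear, so $V^{(n+1)}$ need \emph{not} be affine on segments, and the naive induction breaks. The honest way around this is the one the paper's own later development suggests: one does not claim $V$ itself is affine, and indeed Lemma~\ref{lem:linear} as stated is only about the fixed-action functions $V_{B_i}$, whose affine structure follows \emph{directly} from \eqref{eq:vbbb}--\eqref{eq:vbb2} by the elementary expansion above, using only that $V(\lambda_0,\cdot)$ and $V(\lambda_1,\cdot)$ are evaluated at the two fixed points $\lambda_0,\lambda_1$ in the first coordinate (so those appear as genuine constants) and that $T$ is affine. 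So my final plan is: (i) fix $i$; (ii) write out $V_{B_i}(p_1,p_2)$ from the appropriate display; (iii) observe that all occurrences of $V(\cdot,\cdot)$ are at arguments where each coordinate is one of the constants $\lambda_0,\lambda_1$ after the affine substitution $T$, hence are constants; (iv) collect terms to exhibit the $A+Bp_1+Cp_2+Dp_1p_2$ form; (v) verify the two identities in \eqref{eq:linear} by direct substitution, which for a bilinear function is immediate since $c\,q + (1-c)q'$ plugged into a coordinate that appears affinely reproduces the convex combination of values.

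The main obstacle I anticipate is purely presentational rather than mathematical: one must be careful that for $V_{B_1}$ the second-coordinate arguments $T(p_2)$ are \emph{not} fixed constants, so affinity in $p_2$ is being asserted about $V$ itself along the vertical lines $p_1 = \lambda_0$ and $p_1 = \lambda_1$ — which, strictly, requires knowing $V$ is affine on those particular segments. If the paper intends Lemma~\ref{lem:linear} to stand alone, the clean resolution is to note that affinity of $V$ on axis-parallel segments \emph{is} preserved: although $\max$ of bilinear functions is only piecewise bilinear in two variables, along a single axis-parallel line each $V_a$ is affine (degree one), and a maximum of finitely many affine functions of one variable is convex and piecewise-linear — so one actually needs the slightly stronger, and separately provable, fact that $V$ is \emph{linear} (not merely convex) on these segments, which is exactly the content of the companion structural lemmas the paper develops next. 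I would therefore either (a) restrict the lemma's proof to the genuinely self-contained part — the bilinear form of each $V_{B_i}$ given that the corner/line values are treated as data — or (b) fold it into a joint induction with the subsequent monotonicity/threshold lemmas. Option (a) matches the lemma as literally stated and is what I would write.
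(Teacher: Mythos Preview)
Your diagnosis of where the difficulty lies is exactly right: affinity of $V_{B_b}$ in both arguments and of $V_{B_1}$ in $p_1$ (resp.\ $V_{B_2}$ in $p_2$) is immediate from \eqref{eq:vbbb}--\eqref{eq:vbb2}, and the only nontrivial claim is that $V_{B_1}$ is affine in $p_2$, which reduces to showing that $q\mapsto V(\lambda_0,q)$ and $q\mapsto V(\lambda_1,q)$ are affine. You also correctly observe that a maximum of affine functions of one variable is only piecewise-linear, so this does not follow from convexity alone.

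The gap is that you do not actually close this case. Your option~(a) --- ``treat the corner/line values as data'' --- works for $V_{B_b}$ and for $V_{B_1}$ in $p_1$, but step~(iii) of your plan is false for $V_{B_1}$ in $p_2$: the argument $T(p_2)$ is not a constant, so $V(\lambda_0,T(p_2))$ is not a constant either, and nothing you have written establishes its affinity in $p_2$. Your option~(b) defers to later structural lemmas, but in the paper Lemma~1 is logically prior to those (Lemma~2 and Theorem~1 cite it), so this would be circular.

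The paper handles this by a device you did not find: recursive unfolding together with the contraction property of $T$. At $(\lambda_0,T(p_2))$ the optimal action is one of $B_b,B_1,B_2$. If it is $B_b$ or $B_2$, then $V(\lambda_0,T(p_2))=V_{B_b}(\lambda_0,T(p_2))$ or $V_{B_2}(\lambda_0,T(p_2))$, both already known to be affine in their second argument, and composition with the affine map $T$ preserves affinity in $p_2$. If the optimal action is again $B_1$, one unfolds once more using \eqref{eq:vbb1}, producing terms $V(\lambda_0,T^2(p_2))$ and $V(\lambda_1,T^2(p_2))$, and iterates. Since $T^n(p_2)=\alpha^n p_2+\lambda_0(1-\alpha^n)/(1-\alpha)\to \lambda_0/(1-\alpha)$ geometrically, if $B_1$ remains optimal indefinitely the dependence on $p_2$ washes out entirely and one is left with the constant $V(\lambda_0,\lambda_0/(1-\alpha))$ (plus accumulated affine terms), which is trivially affine. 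This limiting argument is the missing idea in your proposal.
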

\begin{proof}

It is clear that $V_{B_b}$ is affine in $p_1$ and $p_2$ from (\ref{eq:vbbb}). It is also obvious that $V_{B_1}$ is affine in $p_1$ and $V_{B_2}$ is affine in $p_2$ from (\ref{eq:vbb1}) and (\ref{eq:vbb2}). Next we will prove that $V_{B_1}$ is affine in $p_2$ and $V_{B_2}$ is affine in $p_1$, which will make the proof complete.

Now we prove that $V_{B_1}$ is affine in $p_2$. We will first show that the second term on the right side of (\ref{eq:vbb1}) is affine in $p_2$, the third term can then be shown to be affine in $p_2$ in a similar manner, thus the summation of the three terms in (\ref{eq:vbb1}) is affine in $p_2$.

Now let us look at the second term on the right side of (\ref{eq:vbb1}), the main part $V(\lambda_0, T(p_2))$ is one of the following three forms: $V_{B_b}(\lambda_0, T(p_2))$, $V_{B_2}(\lambda_0, T(p_2))$, or $V_{B_1}(\lambda_0, T(p_2))$. The first form $V_{B_b}(\lambda_0, T(p_2))$ is affine in $p_2$ because $V_{B_b}(\lambda_0, T(p_2))$ is affine in $T(p_2)$ and $T(p_2) = \alpha p_2 + \lambda_0$ is affine in $p_2$. Similarly, the second form  $V_{B_2}(\lambda_0, T(p_2))$ is affine in $T(p_2)$ thus affine in $p_2$. The third form $V_{B_1}(\lambda_0, T(p_2))$ is written as:
\begin{eqnarray}\label{eq:linearproof1}
&&V_{B_1}(\lambda_0, T(p_2)) \nonumber \\
&&= \lambda_0 R_h + \beta \lambda_0 V(\lambda_1,T^2(p_2)) + \beta (1-\lambda_0)V(\lambda_0, T^2(p_2)), \nonumber\\
\end{eqnarray}
where $T^n(p)=T(T^{n-1}(p))=\frac{\lambda_0}{1-\alpha}(1-\alpha^n)+\alpha^n p$.
Since $T^n(p_2)$ is affine in $p_2$, (\ref{eq:linearproof1}) is affine in $p_2$ as soon as $V(\lambda_1,T^n(p_2))$ takes the form of $V_{B_b}(\lambda_1,T^n(p_2))$ or $V_{B_2}(\lambda_1,T^n(p_2))$, and $V(\lambda_0, T^2(p_2))$ takes the form of $V_{B_b}(\lambda_0,T^n(p_2))$ or $V_{B_2}(\lambda_0,T^n(p_2))$, $n=2,3,4,...$, which is affine in $p_2$. If $V(\lambda_1,T^n(p_2))$ continues to take the form $V_{B_1}(\lambda_1,T^n(p_2))$ till $n$ goes to infinity, $V(\lambda_1,T^n(p_2))$ will eventually become a constant $V(\lambda_1,\frac{\lambda_0}{1-\alpha})$ because $T^n(p_2) \rightarrow \frac{\lambda_0}{1-\alpha}, n \rightarrow \infty$. Which means a special case of affine linearity in $p_2$. With this we show that the third form $V_{B_1}(\lambda_0, T(p_2))$ is affine in $p_2$. Therefore we have shown that $V(\lambda_0, T(p_2))$ is affine in $p_2$, thus the second term on the right side of (\ref{eq:vbb1}) is affine in $p_2$.

Similarly we can show that the third term on the right side of (\ref{eq:vbb1}) is affine in $p_2$, thus $V_{B_1}(p_1,p_2)$ is affine in $p_2$.

The affine linearity of $V_{B_2}(p_1,p_2)$ in $p_1$ can be proved using the same technique and the detail is omitted due to space limit.
\end{proof}
\begin{lem} $V_{B_i}(p_1,p_2), i \in \{1,2,b\}$ is convex in $p_1$ and $p_2$. 
\end{lem}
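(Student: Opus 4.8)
The plan is to obtain this as an essentially immediate consequence of Lemma~\ref{lem:linear}, after pinning down what ``convex in $p_1$ and $p_2$'' means. First I would recall the elementary fact that an affine function of one real variable, $f(x)=a+cx$, is convex: for $x,x'$ in the domain and $t\in[0,1]$ one has $f(tx+(1-t)x')=a+c(tx+(1-t)x')=t(a+cx)+(1-t)(a+cx')=tf(x)+(1-t)f(x')$, so the convexity inequality holds (with equality). By Lemma~\ref{lem:linear}, for each $i\in\{1,2,b\}$ the restriction $p_1\mapsto V_{B_i}(p_1,p_2)$ is affine for every fixed $p_2\in[0,1]$, and the restriction $p_2\mapsto V_{B_i}(p_1,p_2)$ is affine for every fixed $p_1\in[0,1]$. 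Applying the elementary fact to each of these one-dimensional restrictions shows $V_{B_i}$ is convex in $p_1$ and convex in $p_2$, which is exactly the claim.

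A subtlety worth stating explicitly is that the convexity here must be read coordinate-wise (convex in each variable with the other held fixed), not as joint convexity on $[0,1]^2$: equation~(\ref{eq:vbbb}) shows that $V_{B_b}$ carries a bilinear $p_1p_2$ term with coefficient $\beta[V(\lambda_0,\lambda_0)-V(\lambda_1,\lambda_0)-V(\lambda_0,\lambda_1)+V(\lambda_1,\lambda_1)]$, which is generally nonzero, so $V_{B_b}$ need not be jointly convex. Under the coordinate-wise reading, which is the only one consistent with Lemma~\ref{lem:linear}, there is essentially no obstacle left — all the work was already done in Lemma~\ref{lem:linear}, and this lemma is best viewed as a corollary recording convexity for later use (e.g.\ to propagate convexity to $V=\max\{V_{B_b},V_{B_1},V_{B_2}\}$ and thence to the threshold arguments).

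If one prefers an argument that does not lean on the limiting step inside the proof of Lemma~\ref{lem:linear}, the alternative I would use is a value-iteration proof: set $V_0\equiv 0$ and $V_{n+1}(\vec p)=\max_{a\in\{B_b,B_1,B_2\}}\{g_a(\vec p)+\beta E^{\vec y}[V_n(\vec y)\mid \vec x_0=\vec p,\,a_0=a]\}$, and show by induction on $n$ that each $V_n$ is convex in $p_1$ and in $p_2$. The induction step needs only three standard facts: (i) each $g_a$ is affine, hence convex, in each coordinate; (ii) in each coordinate the continuation term is a nonnegative-weighted combination of copies of $V_n$ composed with the affine update $T$ on the unobserved channel and evaluated at the fixed points $\lambda_0,\lambda_1$ on the observed channel, and both composition with an affine map and nonnegative combination preserve convexity; and (iii) a pointwise maximum of convex functions is convex. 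Since $\beta<1$ the iterates converge uniformly, $V_n\to V$, so $V$ is convex in each coordinate, and substituting this back into (\ref{eq:vbbb})--(\ref{eq:vbb2}) recovers convexity of each $V_{B_i}$ in each coordinate. On this route the only point requiring a little care — the ``hard part,'' such as it is — is justifying the uniform convergence and that coordinate-wise convexity is preserved in the limit, both of which are routine for contraction-type value iteration.
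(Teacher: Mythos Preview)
Your proposal is correct and matches the paper's approach exactly: the paper's proof is the single sentence ``The convexity of $V_{B_i}, i \in \{1,2,b\}$ in $p_1$ and $p_2$ follows from its affine linearity in Lemma~\ref{lem:linear}.'' Your additional remarks on the coordinate-wise reading and the alternative value-iteration argument are valid elaborations, but the paper does not include them.
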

\begin{proof}
The convexity of $V_{B_i}, i \in \{1,2,b\}$ in $p_1$ and $p_2$ follows from its affine linearity in Lemma 1.
\end{proof}
\begin{lem}
\label{lem:vbsem}
$V(p_1,p_2)=V(p_2,p_1)$, that is, $V(p_1,p_2)$ is symmetric with respect to the line $p_1=p_2$ in the belief space.
\end{lem}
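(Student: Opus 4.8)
The plan is to exploit the fact that the model is invariant under interchanging the two channels (they are stochastically identical and independent, action $B_b$ is self-symmetric, and actions $B_1,B_2$ are mirror images of each other), and to push this symmetry through the Bellman fixed point by value iteration. Define the Bellman operator $\mathcal{T}$ on bounded functions $W:[0,1]^2\to\mathbb{R}$ by $(\mathcal{T}W)(\vec{p})=\max\{(\mathcal{T}_{B_b}W)(\vec{p}),(\mathcal{T}_{B_1}W)(\vec{p}),(\mathcal{T}_{B_2}W)(\vec{p})\}$, where $(\mathcal{T}_aW)(\vec{p})$ is the right-hand side of (\ref{eq:vbbb})--(\ref{eq:vbb2}) with $V$ replaced by $W$. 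Since $\beta<1$ and the rewards are bounded, $\mathcal{T}$ is a contraction in the sup norm (standard discounted-MDP theory, as underlies \cite{ross1970}), so the iterates $V_{n+1}=\mathcal{T}V_n$ with $V_0\equiv 0$ converge to the unique fixed point $V$.

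First I would record the relevant ``swap'' identities for the per-action operators. Writing $(\sigma W)(p_1,p_2)=W(p_2,p_1)$, inspection of (\ref{eq:vbbb}) shows that it is invariant under simultaneously exchanging $p_1\leftrightarrow p_2$ and the two coordinates of every $V(\cdot,\cdot)$ term, so $(\mathcal{T}_{B_b}W)(p_1,p_2)=(\mathcal{T}_{B_b}(\sigma W))(p_2,p_1)$. Comparing (\ref{eq:vbb1}) with (\ref{eq:vbb2}), and using that the belief update $T$ is literally the same function in both (the channels share $\lambda_0,\lambda_1$), gives $(\mathcal{T}_{B_1}W)(p_1,p_2)=(\mathcal{T}_{B_2}(\sigma W))(p_2,p_1)$ and $(\mathcal{T}_{B_2}W)(p_1,p_2)=(\mathcal{T}_{B_1}(\sigma W))(p_2,p_1)$. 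Consequently, if $W$ is symmetric, i.e. $\sigma W=W$, then the triple $\{(\mathcal{T}_{B_b}W)(p_1,p_2),(\mathcal{T}_{B_1}W)(p_1,p_2),(\mathcal{T}_{B_2}W)(p_1,p_2)\}$ coincides as an unordered set with $\{(\mathcal{T}_{B_b}W)(p_2,p_1),(\mathcal{T}_{B_2}W)(p_2,p_1),(\mathcal{T}_{B_1}W)(p_2,p_1)\}$, so taking maxima yields $(\mathcal{T}W)(p_1,p_2)=(\mathcal{T}W)(p_2,p_1)$. Thus $\mathcal{T}$ maps symmetric functions to symmetric functions.

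The conclusion is then immediate by induction: $V_0\equiv 0$ is symmetric, hence every $V_n$ is symmetric, and since $V_n\to V$ pointwise the limit $V$ is symmetric as well, i.e. $V(p_1,p_2)=V(p_2,p_1)$.

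I do not expect a serious obstacle; the only point needing care is the bookkeeping in the swap identities relating $\mathcal{T}_{B_1}$ and $\mathcal{T}_{B_2}$ — one must verify that reflecting the state coordinate really does convert the operator for betting on channel~1 into that for betting on channel~2, which rests on the channels being identical. An essentially equivalent alternative, which I would mention but not develop, bypasses value iteration: given any policy $\pi$, construct the ``mirror'' policy $\pi'$ that runs $\pi$ with the two channels' beliefs/observations relabeled and $B_1\leftrightarrow B_2$; then $V^\pi(p_1,p_2)=V^{\pi'}(p_2,p_1)$ directly from (\ref{eq:vbpip}), and taking the supremum over $\pi$ on both sides gives the symmetry of $V$ in (\ref{eq:vbp}). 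I would present the value-iteration argument as the primary proof since it is the quickest to make fully rigorous.
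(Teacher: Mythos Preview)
Your proposal is correct and is essentially the same argument as the paper's: both prove symmetry by value iteration, showing by induction that if the $n$-stage value function is symmetric then so is the $(n+1)$-stage one (because swapping coordinates interchanges the $B_1$ and $B_2$ action-values and fixes the $B_b$ action-value), and then passing to the limit $V^n\to V$. The only difference is presentational---you phrase it via the Bellman operator $\mathcal{T}$ and the swap operator $\sigma$, while the paper writes out $V^k_{B_i}$ explicitly---but the inductive step and the limiting argument are identical.
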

\begin{proof}
Define $V^{n}(p_1,p_2)$ as the optimal value when the decision horizon spans only $n$ stages. Then we have
\begin{eqnarray}
& & V^{1}(p_1,p_2) \nonumber \\
&=&\max\{V_{B_b}^{1}(p_1,p_2), V_{B_1}^{1}(p_1,p_2),V_{B_2}^{1}(p_1,p_2)\} \nonumber \\
&=&\max\{p_1R_l+p_2R_l,p_1R_h,p_2R_h \}.
\end{eqnarray}
\begin{eqnarray}
& & V^{1}(p_2,p_1) \nonumber \\
&=&\max\{V_{B_b}^{1}(p_2,p_1), V_{B_1}^{1}(p_2,p_1),V_{B_2}^{1}(p_2,p_1)\} \nonumber \\
&=&\max\{p_2R_l+p_1R_l,p_2R_h,p_1R_h \}.
\end{eqnarray}
It is easy to see that
\begin{equation}
V^1(p_1,p_2)=V^1(p_2,p_1).
\end{equation}
Assume $V^k(x_1,x_2)=V^k(x_2,x_1), k \geq 1$, next we will prove that $V^{k+1}(p_1,p_2)=V^{k+1}(p_2,p_1)$.
\begin{eqnarray}
& & V_{B_b}^{k+1}(p_1,p_2) \nonumber \\
&=&p_1R_l+p_2R_h+\beta[(1-p_1)(1-p_2)V^k(\lambda_0,\lambda_0) \nonumber \\
&+& p_1(1-p_2)V^k(\lambda_1,\lambda_0)+(1-p_1)p_2V^k(\lambda_0,\lambda_1) \nonumber \\
 &+& p_1p_2V^k(\lambda_1,\lambda_1)]
\end{eqnarray}
\begin{eqnarray}
& &V_{B_1}^{k+1}(p_1,p_2)= p_1R_h+ \nonumber \\
& &\beta[(1-p_1)V^k(\lambda_0,T(p_2))+p_1V^k(\lambda_1,T(p_2))].
\end{eqnarray}
\begin{eqnarray}
& &V_{B_2}^{k+1}(p_1,p_2)=p_2R_h+ \nonumber \\
& &\beta[(1-p_2)V^k(T(p_1),\lambda_0)+p_2V^k(T(p_1),\lambda_1)].
\end{eqnarray}
Using the assumption that $V^k(x_1,x_2)=V^k(x_2,x_1)$, it is easy to see that
\begin{eqnarray}
& & V_{B_1}^{k+1}(p_2,p_1) \nonumber \\
&=&p_2R_h+\beta[(1-p_2)V^k(\lambda_0,T(p_1))+p_2V^k(\lambda_1,T(p_1))] \nonumber \\
&=&p_2R_h+\beta[(1-p_2)V^k(T(p_1),\lambda_0)+p_2V^k(T(p_1),\lambda_1)] \nonumber \\
&=&V_{B_2}^{k+1}(p_1,p_2)
\end{eqnarray}
Similarly, we have $V_{B_2}^{k+1}(p_2,p_1)=V_{B_1}^{k+1}(p_1,p_2)$, and $V_{B_b}^{k+1}(p_2,p_1)=V_{B_b}^{k+1}(p_1,p_2)$, thus
\begin{eqnarray}
& & V^{k+1}(p_1,p_2) \nonumber \\
&=&\max\{V_{B_b}^{k+1}(p_1,p_2),V_{B_1}^{k+1}(p_1,p_2),V_{B_2}^{k+1}(p_1,p_2)\} \nonumber \\
&=& \max\{V_{B_b}^{k+1}(p_2,p_1),V_{B_2}^{k+1}(p_2,p_1),V_{B_1}^{k+1}(p_2,p_1)\} \nonumber \\
&=& V^{k+1}(p_2,p_1).
\end{eqnarray}
From the theory of MDPs, we know that $V^n(p_1,p_2) \rightarrow V(p_1,p_2)$ as $n \rightarrow \infty$.
Hence we have $V(p_1,p_2)=V(p_2,p_1)$, for any $(p_1,p_2)$ in the belief space.
\end{proof}

\subsection*{B: Properties of the decision regions of policy $\pi*$}

We use $\Phi_a$ to denote the set of beliefs for which it is optimal to take the action $a$. That is,
\begin{eqnarray}
\Phi_a= \{ (p_1,p_2) \in ([0,1],[0,1]), V(p_1,p_2)=V_a(p_1,p_2)\}, \nonumber\\
a \in \{ B_b, B_1, B_2 \}.
\end{eqnarray}
\begin{definition}
\label{def:contiguous}
$\Phi_a$ is said to be contiguous along $p_1$ dimension if we have $(x_1, p_2) \in \Phi_a$ and $(x_2, p_2) \in \Phi_a$, then $\forall x \in [x_1,x_2]$, we have $(x,p_2) \in \Phi_a$. Similarly, we say $\Phi_a$ is contiguous along $p_2$ dimension if we have $(p_1, y_1) \in \Phi_a$ and $(p_1,y_2) \in \Phi_a$, then $\forall y \in [y_1,y_2]$, we have $(p_1,y) \in \Phi_a$.
\end{definition}
\begin{thm}
\label{thm:contiguous}
$\Phi_{B_b}$ is contiguous in both $p_1$ and $p_2$ dimensions. $\Phi_{B_1}$ is contiguous in $p_1$ dimension, and $\Phi_{B_2}$ is contiguous in $p_2$ dimension.
\end{thm}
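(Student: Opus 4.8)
The plan is to reduce each of the three contiguity claims to a one-dimensional observation and then combine Lemma~\ref{lem:linear} with the convexity of the value function. Take $\Phi_{B_b}$ and the $p_1$ dimension first. I would fix an arbitrary $p_2 = c \in [0,1]$ and look at the single-variable functions $f(p_1) := V(p_1,c)$ and $g(p_1) := V_{B_b}(p_1,c)$ on $[0,1]$. By Lemma~\ref{lem:linear} each of $V_{B_b}(\cdot,c)$, $V_{B_1}(\cdot,c)$, $V_{B_2}(\cdot,c)$ is affine in $p_1$, so by the Bellman equation $f = \max\{V_{B_b}(\cdot,c),\,V_{B_1}(\cdot,c),\,V_{B_2}(\cdot,c)\}$ is a pointwise maximum of affine functions and hence convex in $p_1$; and $g$ is itself affine in $p_1$. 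Since $V \ge V_{B_b}$ with equality exactly on $\Phi_{B_b}$, the slice $\Phi_{B_b} \cap \{p_2 = c\}$ is precisely $\{p_1 : f(p_1) - g(p_1) \le 0\}$. Now $f - g$ is convex (a convex function minus an affine one) and nonnegative, so its zero set coincides with its set of minimizers, which is an interval; that is exactly contiguity of $\Phi_{B_b}$ along $p_1$ at level $c$, and $c$ was arbitrary.

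For the $p_2$ dimension of $\Phi_{B_b}$ I would either repeat the same three steps with $p_1$ held fixed, or invoke the symmetry $V(p_1,p_2) = V(p_2,p_1)$ from Lemma~\ref{lem:vbsem} together with the symmetry $V_{B_b}(p_1,p_2) = V_{B_b}(p_2,p_1)$, which follows from~(\ref{eq:vbbb}) once Lemma~\ref{lem:vbsem} is applied to give $V(\lambda_0,\lambda_1) = V(\lambda_1,\lambda_0)$; this reflection carries $p_1$-slices of $\Phi_{B_b}$ onto $p_2$-slices. For $\Phi_{B_1}$ along $p_1$ the argument is identical with $g(p_1) := V_{B_1}(p_1,c)$, which is affine in $p_1$ by Lemma~\ref{lem:linear}, so $f - g$ is again convex and nonnegative and $\Phi_{B_1} \cap \{p_2 = c\} = \{p_1 : f(p_1) = g(p_1)\}$ is an interval. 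The claim for $\Phi_{B_2}$ along $p_2$ is the mirror image: fix $p_1 = c$, set $g(p_2) := V_{B_2}(c,p_2)$, which is affine in $p_2$, and run the same steps (or simply transport the $\Phi_{B_1}$ result through Lemma~\ref{lem:vbsem}).

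I do not expect a genuine obstacle here: once Lemma~\ref{lem:linear} is in hand, the theorem rests only on the elementary fact that a nonnegative convex function of one real variable has an interval as its zero set. The single point requiring care is to slice along the \emph{same} coordinate in which both $V$ is convex and $V_{B_i}$ is affine, so that ``convex minus affine is convex'' is applied legitimately to a function of one variable; the per-coordinate affineness supplied by Lemma~\ref{lem:linear} is exactly what licenses this. (Incidentally the same reasoning shows that $\Phi_{B_1}$ is also contiguous along $p_2$ and $\Phi_{B_2}$ along $p_1$, since $V_{B_1}$ is affine in $p_2$ and $V_{B_2}$ is affine in $p_1$, but only the directions stated in the theorem are needed for the threshold structure developed later.)
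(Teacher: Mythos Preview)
Your proposal is correct and takes essentially the same approach as the paper: both arguments fix one coordinate, use affineness of $V_{B_i}$ in the free coordinate (Lemma~\ref{lem:linear}) together with convexity of $V$ (as a pointwise maximum of affine functions) and the inequality $V\ge V_{B_i}$, and conclude that the slice is an interval. The paper writes this out as the explicit sandwich $V(cx_1+(1-c)x_2,p_2)\le cV(x_1,p_2)+(1-c)V(x_2,p_2)=V_{B_i}(cx_1+(1-c)x_2,p_2)\le V(cx_1+(1-c)x_2,p_2)$, whereas you phrase it as ``the zero set of a nonnegative convex function of one real variable is an interval''; these are the same argument in different clothing.
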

\begin{proof}
Here we will prove the theory for $\Phi_{B1}$, and the results for $\Phi_{B_2}$ and $\Phi_{B_b}$ can be proved in a similar manner.
Let $(x_1,p_2), (x_2,p_2) \in \Phi_{B_1}$, next we show that $((c x_1+(1-c)x_2),p_2))$ is also in region $\Phi_{B_1}$, where $c \in [0,1]$.
\begin{eqnarray}
\label{eq:vbx12}
& & V ((c x_1+(1-c)x_2),p_2) \nonumber \\
& \leq & c V(x_1,p_2) + (1-c) V(x_2,p_2)\nonumber \\
&=&c V_{B_1}(x_1,p_2) + (1-c) V_{B_1}(x_2,p_2) \nonumber\\
&=&V_{B_1} ((c x_1+(1-c)x_2),p_2)\nonumber \\
& \leq & V ((c x_1+(1-c)x_2),p_2),
\end{eqnarray}
where the first inequality comes from the convexity of $V(p_1,p_2)$ in $p_1$; the first equality follows from the fact that $(x_1,p_2), (x_2,p_2) \in \Phi_{B_1}$; the second equality comes from the fact that $V_{B_1}$ is linear in $p_1$ as in Lemma \ref{lem:linear}; the last inequality follows from the definition of $V(p_1,p_2)$. In the above equation, we have $V ((c x_1+(1-c)x_2),p_2)=V_{B_1} ((c x_1+(1-c)x_2),p_2)$, which means $(c x_1+(1-c)x_2),p_2)$ is in the region $\Phi_{B1}$, therefore $\Phi_{B1}$ is contiguous in $p_1$ dimension by definition \ref{def:contiguous}.
\end{proof}

\begin{thm}
\label{thm:phib1b2sym}
If belief $(p_1,p_2)$ is in $\Phi_{B_1}$, then belief $(p_2,p_1)$ is in $\Phi_{B_2}$. In other words, the decision regions of $B_1$ and $B_2$ are mirrors with respect to the line $p_1=p_2$ in the belief space.
\end{thm}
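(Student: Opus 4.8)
The plan is to exploit the symmetry of the value function established in Lemma~\ref{lem:vbsem}, namely $V(p_1,p_2)=V(p_2,p_1)$, together with the observation that swapping the roles of the two channels turns the expression for $V_{B_1}$ into that for $V_{B_2}$ with the arguments transposed. Concretely, I would start from the hypothesis $(p_1,p_2)\in\Phi_{B_1}$, which by definition of $\Phi_{B_1}$ means $V(p_1,p_2)=V_{B_1}(p_1,p_2)\ge V_{B_b}(p_1,p_2)$ and $V(p_1,p_2)=V_{B_1}(p_1,p_2)\ge V_{B_2}(p_1,p_2)$.

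The key algebraic fact I would record first is the pair of identities
\[
V_{B_1}(p_1,p_2)=V_{B_2}(p_2,p_1),\qquad V_{B_2}(p_1,p_2)=V_{B_1}(p_2,p_1),\qquad V_{B_b}(p_1,p_2)=V_{B_b}(p_2,p_1).
\]
These follow by inspecting (\ref{eq:vbb1}), (\ref{eq:vbb2}), (\ref{eq:vbbb}) and substituting the symmetry relation $V(\lambda_1,T(p))=V(T(p),\lambda_1)$ and $V(\lambda_0,T(p))=V(T(p),\lambda_0)$ from Lemma~\ref{lem:vbsem} — this is essentially the same computation already carried out inside the induction step of the proof of Lemma~\ref{lem:vbsem}, so I can cite it. With these identities in hand, the rest is bookkeeping: from $V(p_1,p_2)=V_{B_1}(p_1,p_2)$ and $V(p_1,p_2)=V(p_2,p_1)$ we get $V(p_2,p_1)=V_{B_1}(p_1,p_2)=V_{B_2}(p_2,p_1)$, which by the Bellman equation (\ref{eq:bellman}) and the definition of $\Phi_{B_2}$ says exactly that $(p_2,p_1)\in\Phi_{B_2}$.

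So the proof would be organized as: (i) state and justify the three swap identities for $V_{B_b},V_{B_1},V_{B_2}$, pointing to the computation in the proof of Lemma~\ref{lem:vbsem}; (ii) write $V(p_2,p_1)=V(p_1,p_2)=V_{B_1}(p_1,p_2)=V_{B_2}(p_2,p_1)$; (iii) conclude $(p_2,p_1)\in\Phi_{B_2}$, and note the converse direction (and hence the "mirror" statement) follows by relabeling. I expect the only mildly delicate point to be making the swap identity $V_{B_1}(p_1,p_2)=V_{B_2}(p_2,p_1)$ fully rigorous, since $V_{B_1}$ itself is defined in terms of $V$, which is only known to be symmetric through the finite-horizon limiting argument of Lemma~\ref{lem:vbsem}; but since that lemma is already available, I can simply invoke $V(a,b)=V(b,a)$ wherever it appears inside (\ref{eq:vbb1}) and (\ref{eq:vbb2}), so no genuine obstacle remains — the argument is essentially immediate once Lemma~\ref{lem:vbsem} is in place.
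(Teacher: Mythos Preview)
Your proposal is correct and follows essentially the same approach as the paper: both use Lemma~\ref{lem:vbsem} together with the swap identities $V_{B_1}(p_1,p_2)=V_{B_2}(p_2,p_1)$, $V_{B_2}(p_1,p_2)=V_{B_1}(p_2,p_1)$, and $V_{B_b}(p_1,p_2)=V_{B_b}(p_2,p_1)$ obtained by inspecting (\ref{eq:vbbb})--(\ref{eq:vbb2}). Your presentation is in fact slightly more economical, since you conclude directly from $V(p_2,p_1)=V(p_1,p_2)=V_{B_1}(p_1,p_2)=V_{B_2}(p_2,p_1)$ rather than separately verifying the two comparison inequalities as the paper does.
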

\begin{proof}
Let $(p_1,p_2)$ be a belief state in the decision region of $B_1$, then we have
\begin{eqnarray}
V(p_1,p_2)&=& \max \{V_{B_b}(p_1,p_2),V_{B_1}(p_1,p_2),V_{B_2}(p_1,p_2)\} \nonumber \\
& =&V_{B_1}(p_1,p_2).
\end{eqnarray}
Using equations (\ref{eq:vbbb}),(\ref{eq:vbb1}) and (\ref{eq:vbb2}),we have
\begin{eqnarray}
\label{eq:vbb1xy}
& & V_{B_1}(p_1,p_2) \nonumber \\
&=&p_1R_h+\beta[(1-p_1)V(\lambda_0,T(p_2))+p_1V(\lambda_1,T(p_2))] \nonumber\\
& > & V_{B_2}(p_1,p_2) \nonumber \\
& = & p_2R_h+\beta[(1-p_2)V(T(p_1),\lambda_0)+p_2V(T(p_1),\lambda_1)], \nonumber \\
\end{eqnarray}
and
\begin{eqnarray}
\label{eq:vbb1xyvb}
& &V_{B_1}(p_1,p_2) \nonumber \\
&> & V_{B_b}(p_1,p_2) \nonumber \\
& = &p_1R_l+p_2R_l+\beta[(1-p_1)(1-p_2)V(\lambda_0,\lambda_0)\nonumber \\
&+&p_1(1-p_2)V(\lambda_1,\lambda_0)+(1-p_1)p_2V(\lambda_0,\lambda_1) \nonumber\\
&+&p_1p_2V(\lambda_1,\lambda_1)].
\end{eqnarray}
Now consider the belief state of $(p_2,p_1)$,
\begin{eqnarray}
& & V_{B_2}(p_2,p_1) \nonumber \\
&=&p_1R_h+\beta[(1-p_1)V(T(p_2),\lambda_0)+p_1V(T(p_2),\lambda_1)] \nonumber \\
& =& V_{B_1}(p_1,p_2) \nonumber \\
& > & V_{B_2}(p_1,p_2) \nonumber \\
& =&p_2R_h+\beta[(1-p_2)V(T(p_1),\lambda_0)+p_2V(T(p_1),\lambda_1)] \nonumber \\
& = & V_{B_1}(p_2,p_1),
\end{eqnarray}
where the second and last equations follow by comparing the expression in equation (\ref{eq:vbb1xy}) and using the fact that $V(p_1,p_2)=V(p_2,p_1)$ (Lemma \ref{lem:vbsem}). Similarly, from (\ref{eq:vbb1xyvb}) and Lemma \ref{lem:vbsem}, we have
\begin{equation}
V_{B_2}(p_2,p_1)>V_{B_b}(p_2,p_1).
\end{equation}
Thus we have
\begin{eqnarray}
V(p_2,p_1)&=&\max \{V_{B_b}(p_2,p_1),V_{B_1}(p_2,p_1),V_{B_2}(p_2,p_1)\} \nonumber \\
&=&V_{B_2}(p_2,p_1),
\end{eqnarray}
which means $(p_2,p_1)$ lies in the decision region of $B_2$, that is, $(p_2,p_1) \in \Phi_{B_2}$. This concludes the proof.
\end{proof}
\begin{thm}
\label{thm:PhiBbsym}
If belief $(p_1,p_2)$ is in $\Phi_{B_b}$, then belief $(p_2,p_1)$ is in $\Phi_{B_b}$. That is, the decision region of $B_b$ is symmetric with respect to the line $p_1=p_2$ in the belief region.
\end{thm}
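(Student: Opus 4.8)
The plan is to imitate the proof of Theorem~\ref{thm:phib1b2sym}, now using the full symmetry of $V$ from Lemma~\ref{lem:vbsem}. The first step is to record two symmetry identities for the action-value functions themselves. Inspecting~(\ref{eq:vbbb}) and using $V(\lambda_1,\lambda_0)=V(\lambda_0,\lambda_1)$ (Lemma~\ref{lem:vbsem}) together with the elementary identity $p_1(1-p_2)+(1-p_1)p_2 = p_2(1-p_1)+(1-p_2)p_1$, one sees directly that $V_{B_b}(p_1,p_2)=V_{B_b}(p_2,p_1)$. Comparing~(\ref{eq:vbb1}) and~(\ref{eq:vbb2}) and applying $V(a,b)=V(b,a)$ yields $V_{B_1}(p_1,p_2)=V_{B_2}(p_2,p_1)$ and, symmetrically, $V_{B_2}(p_1,p_2)=V_{B_1}(p_2,p_1)$; this is precisely the manipulation already carried out inside the proof of Theorem~\ref{thm:phib1b2sym}.

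The second step is the main argument. Suppose $(p_1,p_2)\in\Phi_{B_b}$, so that $V_{B_b}(p_1,p_2)\ge V_{B_1}(p_1,p_2)$ and $V_{B_b}(p_1,p_2)\ge V_{B_2}(p_1,p_2)$. Chaining the three identities above,
\begin{eqnarray*}
V_{B_b}(p_2,p_1) &=& V_{B_b}(p_1,p_2)\ \ge\ V_{B_1}(p_1,p_2)\ =\ V_{B_2}(p_2,p_1),\\
V_{B_b}(p_2,p_1) &=& V_{B_b}(p_1,p_2)\ \ge\ V_{B_2}(p_1,p_2)\ =\ V_{B_1}(p_2,p_1).
\end{eqnarray*}
Hence $V_{B_b}(p_2,p_1)=\max\{V_{B_b}(p_2,p_1),V_{B_1}(p_2,p_1),V_{B_2}(p_2,p_1)\}=V(p_2,p_1)$, i.e. $(p_2,p_1)\in\Phi_{B_b}$, which is the assertion.

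I do not expect a genuine obstacle here: the proof is essentially bookkeeping on the Bellman expressions. The one point that needs care is to make sure the two symmetry identities are obtained non-circularly — they must follow only from~(\ref{eq:vbbb})--(\ref{eq:vbb2}) and Lemma~\ref{lem:vbsem}, not from the theorem being proved — but since Lemma~\ref{lem:vbsem} is established independently (via the finite-horizon induction given there), this is automatic. As a sanity check and alternative route, one may also observe that, up to boundary ties, $\Phi_{B_b}$ is the complement of $\Phi_{B_1}\cup\Phi_{B_2}$ in the belief square, and Theorem~\ref{thm:phib1b2sym} already shows $\Phi_{B_1}\cup\Phi_{B_2}$ is symmetric about the line $p_1=p_2$; I would present the direct computation as the main proof and note this complementation remark only in passing.
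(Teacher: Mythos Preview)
Your proof is correct and follows essentially the same approach as the paper: both arguments establish $V_{B_b}(p_2,p_1)=V_{B_b}(p_1,p_2)$ and $V_{B_1}(p_1,p_2)=V_{B_2}(p_2,p_1)$ from (\ref{eq:vbbb})--(\ref{eq:vbb2}) together with Lemma~\ref{lem:vbsem}, then chain these with the optimality inequalities at $(p_1,p_2)$ to conclude $(p_2,p_1)\in\Phi_{B_b}$. Your organization (isolating the action-value symmetries first, then applying them) is slightly cleaner, and your use of $\ge$ rather than strict inequality is in fact more accurate given the definition of $\Phi_a$.
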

\begin{proof}
Suppose $(p_1,p_2)$ is in $\Phi_{B_b}$, then we have
\begin{eqnarray}
V(p_1,p_2)&=&\max\{V_{B_b}(p_1,p_2),V_{B_1}(p_1,p_2),V_{B_2}(p_1,p_2)\} \nonumber \\
&=&V_{B_b}(p_1,p_2)
\end{eqnarray}
Now consider the belief state $(p_2,p_1)$,
\begin{eqnarray}
& &V_{B_b}(p_2,p_1) \nonumber \\
&= &p_2R_l+p_1R_l+\beta[(1-p_2)(1-p_1)V(\lambda_0,\lambda_0)\nonumber \\
&+&p_2(1-p_1)V(\lambda_1,\lambda_0)+(1-p_2)p_1V(\lambda_0,\lambda_1) \nonumber\\
&+&p_2p_1V(\lambda_1,\lambda_1)] \nonumber \\
&=& V_{B_b}(p_1,p_2) \nonumber \\
&>& V_{B_1}(p_1,p_2)\nonumber \\
&=&p_1R_h+\beta[(1-p_1)V(\lambda_0,T(p_2))+p_1V(\lambda_1,T(p_2))] \nonumber \\
&=& V_{B_2}(p_2,p_1),
\end{eqnarray}
where the equations follow from (\ref{eq:vbbb}), (\ref{eq:vbb1}), (\ref{eq:vbb2}) and Lemma \ref{lem:vbsem}. The inequality comes from the assumption that $(p_1,p_2)$ is in $\Phi_{B_b}$. Similarly, we have $V_{B_b}(p_2,p_1) > V_{B_1}(p_2,p_1)$. That is,
\begin{eqnarray}
V(p_2,p_1)&=&\max\{V_{B_b}(p_2,p_1),V_{B_1}(p_2,p_1),V_{B_2}(p_2,p_1)\} \nonumber \\
&=&V_{B_b}(p_2,p_1).
\end{eqnarray}
That is, $(p_2,p_1)$ is in $\Phi_{B_b}$. And this concludes the proof.
\end{proof}
\begin{lem}
After each channel is used once, the belief state is the four sides of a rectangle determined by four vertices at $(\lambda_0,\lambda_0), (\lambda_0,\lambda_1), (\lambda_1,\lambda_0), (\lambda_1,\lambda_1)$ (Figure \ref{fig:decisionline} (a)).
\end{lem}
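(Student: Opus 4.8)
The plan is to prove a statement slightly stronger than the one asserted, namely that the topological boundary $\partial R$ of the rectangle $R:=[\lambda_0,\lambda_1]\times[\lambda_0,\lambda_1]$ is \emph{absorbing} for the belief process and that the belief lands in $\partial R$ as soon as the first channel use occurs; the lemma then follows a fortiori once each channel has been used at least once. The argument is an induction on the slot index, built on the explicit one-step belief updates read off from (\ref{eq:vbbb})--(\ref{eq:vbb2}): action $B_b$ sends the belief to one of the four corners $(\lambda_0,\lambda_0),(\lambda_0,\lambda_1),(\lambda_1,\lambda_0),(\lambda_1,\lambda_1)$; action $B_1$ sends $(p_1,p_2)$ to $(\lambda_0,T(p_2))$ or $(\lambda_1,T(p_2))$; and action $B_2$ sends $(p_1,p_2)$ to $(T(p_1),\lambda_0)$ or $(T(p_1),\lambda_1)$, where $T(p)=\alpha p+\lambda_0$ with $\alpha=\lambda_1-\lambda_0\ge 0$.

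First I would record the one elementary fact the whole argument rests on: $T$ maps $[0,1]$ onto $[\lambda_0,\lambda_1]$ and therefore maps $[\lambda_0,\lambda_1]$ into itself. Indeed $T$ is nondecreasing since $\alpha\ge 0$, with $T(0)=\lambda_0$ and $T(1)=\lambda_1$; because $0\le\lambda_0\le\lambda_1\le 1$ we have $[\lambda_0,\lambda_1]\subseteq[0,1]$, hence $T([\lambda_0,\lambda_1])\subseteq T([0,1])=[\lambda_0,\lambda_1]$. With this in hand the base case is immediate: every action uses at least one channel, so after the slot-$0$ action the belief $\vec{x}_1$ already lies on $\partial R$ --- if $a_0=B_b$ it is a corner of $R$; if $a_0=B_1$ its first coordinate is $\lambda_0$ or $\lambda_1$ and its second coordinate is $T(p_2)\in[\lambda_0,\lambda_1]$, so it lies on the left or right edge; and $a_0=B_2$ is symmetric, placing $\vec{x}_1$ on the bottom or top edge. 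In particular $\vec{x}_t\in\partial R$ once both channels have been used.

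For the inductive step I would assume $\vec{x}_t\in\partial R$ --- equivalently, $x_{1,t},x_{2,t}\in[\lambda_0,\lambda_1]$ with at least one of the two equal to $\lambda_0$ or $\lambda_1$ --- and verify that whichever action is applied in slot $t$ keeps the belief in $\partial R$: under $B_b$ the next belief is a corner; under $B_1$ the first coordinate becomes $\lambda_0$ or $\lambda_1$ while the second becomes $T(x_{2,t})\in[\lambda_0,\lambda_1]$ by the invariance just established, so $\vec{x}_{t+1}$ sits on a vertical edge of $R$; under $B_2$ the symmetric statement puts $\vec{x}_{t+1}$ on a horizontal edge. Hence $\partial R$ is absorbing, and combining this with the base case yields the lemma (and identifies $\partial R$ with the ``four sides of the rectangle'' of Figure~\ref{fig:decisionline}(a)). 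I do not expect a genuine obstacle here; the only two points that need care are the inclusion $T([\lambda_0,\lambda_1])\subseteq[\lambda_0,\lambda_1]$, which is exactly where the standing hypothesis $0\le\lambda_0\le\lambda_1\le 1$ enters, and keeping straight --- across the three actions --- which coordinate is ``pinned'' to $\{\lambda_0,\lambda_1\}$ and which is merely confined to the interval $[\lambda_0,\lambda_1]$.
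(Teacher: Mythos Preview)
Your proposal is correct and follows essentially the same line as the paper's proof: both rest on (i) the observation that every action pins at least one coordinate to $\{\lambda_0,\lambda_1\}$ since at least one channel is always used, and (ii) the range fact $T([0,1])=[\lambda_0,\lambda_1]$, which confines the unobserved coordinate to the interval. Your explicit induction showing $\partial R$ is absorbing is a tidy packaging but not strictly needed, since the base-case argument already shows that \emph{any} action from \emph{any} belief in $[0,1]^2$ lands on $\partial R$; the paper simply states this one-shot version without the inductive wrapper.
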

\begin{proof}
From the belief update in (\ref{eq:vbbb})(\ref{eq:vbb1})(\ref{eq:vbb2}), it is clear that the belief state of a channel is updated to one of the following three values after any action: $\lambda_0$, $\lambda_1$, or $T(p)$, where $p$ is the current belief of a channel. For any $0 \leq p \leq 1$, $\lambda_0 \leq T(p)= \lambda_0 + (\lambda_1-\lambda_0)p \leq \lambda_1$. Therefore the belief state of a channel is between $\lambda_0$ and $\lambda_1$.

Furthermore, since at least one channel is used in our power allocation strategy, its channel state is revealed at the end of the time slot. This means at least one of the channel has a belief of either $\lambda_0$ or $\lambda_1$. And this concludes the proof.
\end{proof}

\begin{figure}
  \centering
  \includegraphics[width=3.5in]{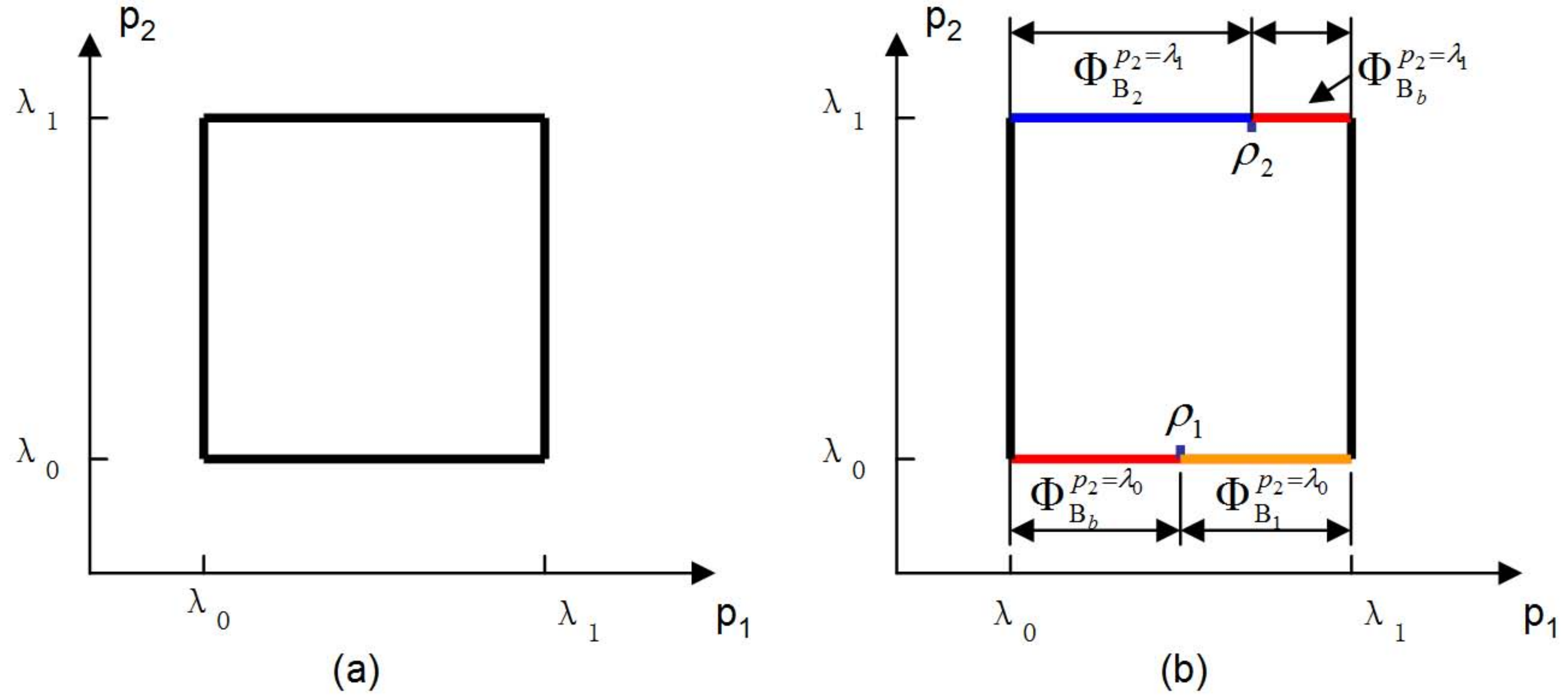}
  \caption{(a) The feasible belief space. (b) The threshold on $p_1$ ( $p_2=\lambda_0(\lambda_1$)).}
  \label{fig:decisionline}
\end{figure}
\begin{thm}
\label{thm:rho1}
Let $p_1 \in [\lambda_0, \lambda_1]$, $p_2=\lambda_0$, there exists a threshold $\rho_1$($\lambda_0 \leq \rho_1 \leq \lambda_1$) such that
$\forall p_1 \in [\lambda_0, \rho_1], (p_1, \lambda_0) \in \Phi_{B_b}$. (Figure \ref{fig:decisionline}(b))
\end{thm}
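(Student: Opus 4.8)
The plan is to reduce the theorem to the single statement that the balanced action is optimal at the corner point $(\lambda_0,\lambda_0)$, and then invoke contiguity. Concretely, let $S=\{p_1\in[\lambda_0,\lambda_1]:(p_1,\lambda_0)\in\Phi_{B_b}\}$. By Theorem~\ref{thm:contiguous}, $\Phi_{B_b}$ is contiguous along $p_1$, so $S$ is an interval; moreover, on the slice $p_2=\lambda_0$ each $V_{B_i}(\cdot,\lambda_0)$ is affine in $p_1$ (Lemma~\ref{lem:linear}), so $V(\cdot,\lambda_0)=\max\{V_{B_b},V_{B_1},V_{B_2}\}(\cdot,\lambda_0)$ is continuous (piecewise linear), and $S=\{p_1:V(p_1,\lambda_0)=V_{B_b}(p_1,\lambda_0)\}$ is closed. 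Hence, once we show $\lambda_0\in S$, we may take $\rho_1:=\sup S\in[\lambda_0,\lambda_1]$, and the interval property gives $[\lambda_0,\rho_1]\subseteq S$, which is exactly the claim. So everything comes down to proving $(\lambda_0,\lambda_0)\in\Phi_{B_b}$, i.e.\ $V_{B_b}(\lambda_0,\lambda_0)\ge\max\{V_{B_1}(\lambda_0,\lambda_0),V_{B_2}(\lambda_0,\lambda_0)\}$.

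To prove this I would compare immediate rewards and continuation terms separately. On the immediate side, $g_{B_b}(\lambda_0,\lambda_0)=2\lambda_0R_l\ge\lambda_0R_h=g_{B_1}(\lambda_0,\lambda_0)=g_{B_2}(\lambda_0,\lambda_0)$, using the assumption $R_h<2R_l$. For the continuation terms, the key observation is that $T(\lambda_0)=\alpha\lambda_0+\lambda_0=\lambda_0\lambda_1+(1-\lambda_0)\lambda_0$ is a convex combination of $\lambda_1$ and $\lambda_0$ with weights $\lambda_0$ and $1-\lambda_0$. Since $V$ is a maximum of functions affine (Lemma~\ref{lem:linear}), hence convex, in each variable, $V$ is convex in each variable; applying convexity in the second argument gives $V(\lambda_0,T(\lambda_0))\le\lambda_0V(\lambda_0,\lambda_1)+(1-\lambda_0)V(\lambda_0,\lambda_0)$ and $V(\lambda_1,T(\lambda_0))\le\lambda_0V(\lambda_1,\lambda_1)+(1-\lambda_0)V(\lambda_1,\lambda_0)$. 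Substituting these into the bracket of (\ref{eq:vbb1}) at $(\lambda_0,\lambda_0)$ and collecting terms, the continuation term of $V_{B_1}(\lambda_0,\lambda_0)$ is bounded above by $(1-\lambda_0)^2V(\lambda_0,\lambda_0)+\lambda_0(1-\lambda_0)[V(\lambda_1,\lambda_0)+V(\lambda_0,\lambda_1)]+\lambda_0^2V(\lambda_1,\lambda_1)$, which is precisely the continuation term of $V_{B_b}(\lambda_0,\lambda_0)$ read off from (\ref{eq:vbbb}). Thus $V_{B_b}(\lambda_0,\lambda_0)\ge V_{B_1}(\lambda_0,\lambda_0)$; running the identical argument with convexity in the \emph{first} argument applied to (\ref{eq:vbb2}) gives $V_{B_b}(\lambda_0,\lambda_0)\ge V_{B_2}(\lambda_0,\lambda_0)$ (alternatively, $V_{B_1}(\lambda_0,\lambda_0)=V_{B_2}(\lambda_0,\lambda_0)$ already follows from (\ref{eq:vbb1})--(\ref{eq:vbb2}) and Lemma~\ref{lem:vbsem}). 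Hence $(\lambda_0,\lambda_0)\in\Phi_{B_b}$, and the reduction above finishes the proof with $\rho_1=\sup S$.

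The only step with real content is $(\lambda_0,\lambda_0)\in\Phi_{B_b}$, and the part that needs care is recognizing $T(\lambda_0)$ as the convex combination of the two ``reset'' beliefs $\lambda_0$ and $\lambda_1$ with exactly the weights $1-\lambda_0$ and $\lambda_0$: this is what makes convexity of $V$ collapse the $B_1$ (and $B_2$) continuation onto the $B_b$ continuation, so that the whole inequality is driven by the elementary fact $2\lambda_0R_l\ge\lambda_0R_h$. Everything else is bookkeeping plus citations of Theorem~\ref{thm:contiguous} and Lemma~\ref{lem:linear}. If convexity in both coordinates were not available, one would instead have to establish the corner optimality by induction on the finite-horizon value functions $V^n$, which is considerably less transparent, so I would lean on the convexity route.
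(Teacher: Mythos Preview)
Your argument is correct and follows the same overall skeleton as the paper: show that the slice $\{p_1:(p_1,\lambda_0)\in\Phi_{B_b}\}$ is an interval, then anchor it at the corner $(\lambda_0,\lambda_0)$ and set $\rho_1$ to be its right endpoint. Where the paper re-derives convexity of this slice set from scratch (essentially replaying the proof of Theorem~\ref{thm:contiguous} on the line $p_2=\lambda_0$), you simply invoke Theorem~\ref{thm:contiguous}, which is cleaner.

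The substantive difference is in the anchoring step. The paper asserts $(\lambda_0,\lambda_0)\in\Phi_{B_b}^{p_2=\lambda_0}$ ``from the fact that $\Phi_{B_b}$ is symmetric,'' but symmetry of $\Phi_{B_b}$ with respect to $p_1=p_2$ (Theorem~\ref{thm:PhiBbsym}) does not by itself force any diagonal point to lie in $\Phi_{B_b}$; this is a gap in the paper's presentation. You instead prove $(\lambda_0,\lambda_0)\in\Phi_{B_b}$ directly: the immediate-reward comparison $2\lambda_0R_l\ge\lambda_0R_h$ plus the observation that $T(\lambda_0)=\lambda_0\lambda_1+(1-\lambda_0)\lambda_0$, combined with convexity of $V$ in each coordinate, collapses the $B_1$ (and $B_2$) continuation term onto the $B_b$ continuation. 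This is exactly the mechanism the paper later uses in the proof of Lemma~\ref{lem:p2lambda0} (there for general $p_1\ge\lambda_0$ on the same slice), so your proof is in effect pulling that computation forward to where it is actually needed. The net result is that your version is self-contained and closes the gap, at the cost of a short extra calculation.
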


\begin{proof}
We introduce the following sets
\begin{eqnarray}
\Phi_a^{p_2=\lambda_0} &=&\{(p_1 \in [\lambda_0, \lambda_1], \lambda_0), V(p_1,\lambda_0)=V_a(p_1,\lambda_0)\}, \nonumber \\
& & a \in \{B_b, B_1, B_2\}.
\end{eqnarray}
We will first prove that $\Phi_{B_b}^{p_2=\lambda_0}$ and $\Phi_{B_1}^{p_2=\lambda_0}$ are convex, which is important to prove the structure of the optimal policy. When $p_2=\lambda_0$, $V_{B_b}(p_1,p_2)$ is rewritten as
\begin{eqnarray}
\label{eq:vbbbyeqlam0}
& &V_{B_b}(p_1,\lambda_0) \nonumber \\
&=&p_1R_l+\lambda_0R_l+\beta[(1-p_1)(1-\lambda_0)V(\lambda_0,\lambda_0)\nonumber \\
&+&p_1(1-\lambda_0)V(\lambda_1,\lambda_0)+(1-p_1)\lambda_0V(\lambda_0,\lambda_1) \nonumber\\
&+&p_1\lambda_0V(\lambda_1,\lambda_1)] \nonumber\\
&=& p_1[R_l-\beta(1-\lambda_0)V(\lambda_0,\lambda_0)-\beta\lambda_0V(\lambda_1,\lambda_0)\nonumber\\
&-&\beta \lambda_0V(\lambda_0,\lambda_1)+\lambda_0V(\lambda_1,\lambda_1)]+\lambda_0R_l\nonumber\\
&+&\beta(1-\lambda_0)V(\lambda_0,\lambda_0)+\beta\lambda_0V(\lambda_0,\lambda_1).
\end{eqnarray}
From equation (\ref{eq:vbbbyeqlam0}) it is easy to see that $V_{B_b}(p_1,\lambda_0)$ is linear in $p_1$. Let $(x_1,\lambda_0),(x_2,\lambda_0) \in \Phi_{B_b}^{p_2=\lambda_0}$ and let $c\in[0,1]$ then we have
\begin{eqnarray}
& &V(cx_1+(1-c)x_2, \lambda_0) \nonumber \\
&\leq& cV(x_1,\lambda_0)+(1-c)V(x_2,\lambda_0)\nonumber \\
&=&cV_{B_b}(x_1,\lambda_0)+(1-c)V_{B_b}(x_2,\lambda_0)\nonumber \\
&=&V_{B_b}(ax_1+(1-c)x_2,\lambda_0) \nonumber \\
&\leq& V(cx_1+(1-c)x_2,\lambda_0),
\end{eqnarray}
where the first inequality comes from the convexity of $V(p_1,\lambda_0)$; the first equality follows from the fact that $(x_1,\lambda_0),(x_2,\lambda_0) \in \Phi_{B_b}^{p_2=\lambda_0}$, and the second equality from the linearity of $V_{B_b}(p_1,\lambda_0)$; the last inequality comes from the definition of $V(\cdot)$. Consequently, $V_{B_b}(cx_1+(1-c)x_2,\lambda_0)=V(cx_1+(1-c)x_2,\lambda_0)$, hence $(cx_1+(1-c)x_2,\lambda_0) \in \Phi_{B_b}^{p_2=\lambda_0}$, which proves the convexity of $\Phi_{B_b}^{p_2=\lambda_0}$. Since convex subsets of the real line are intervals and $(\lambda_0, \lambda_0) \in \Phi_{B_b}^{p_2=\lambda_0}$ (from the fact that $\Phi_{B_b}$ is symmetric), there exists $\rho_1 \in [\lambda_0, \lambda_1]$ such that $\Phi_{B_b}^{p_2=\lambda_0}=[\lambda_0..\rho_1,\lambda_0$] (Figure \ref{fig:decisionline}(b)). In other words, $\forall p_1 \in [\lambda_0, \rho_1], (p_1, \lambda_0) \in \Phi_{B_b}$. And this concludes the proof.

\end{proof}

\begin{thm}
\label{thm:rho2}
Let $p_1 \in [\lambda_0, \lambda_1]$, $p_2=\lambda_1$, there exists a threshold $\rho_2$($\lambda_0 \leq \rho_2 \leq \lambda_1$) such that
$\forall p_1 \in [\rho_2,\lambda_1], (p_1, \lambda_1) \in \Phi_{B_b}$. (Figure \ref{fig:decisionline}(b))
\end{thm}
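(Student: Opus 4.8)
The plan is to mirror the proof of Theorem \ref{thm:rho1} almost verbatim, exploiting the symmetry already established. First I would restrict attention to the line $p_2 = \lambda_1$ and introduce the analogous sets
\[
\Phi_a^{p_2=\lambda_1} = \{(p_1 \in [\lambda_0,\lambda_1],\lambda_1),\ V(p_1,\lambda_1)=V_a(p_1,\lambda_1)\}, \quad a \in \{B_b,B_1,B_2\}.
\]
Setting $p_2 = \lambda_1$ in (\ref{eq:vbbb}) and collecting the coefficient of $p_1$, one sees exactly as in (\ref{eq:vbbbyeqlam0}) that $V_{B_b}(p_1,\lambda_1)$ is affine (hence linear) in $p_1$. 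Then the same four-line sandwich argument — convexity of $V(\cdot,\lambda_1)$ in $p_1$ (Lemma \ref{lem:linear} plus Lemma 2), the definition of $\Phi_{B_b}^{p_2=\lambda_1}$, linearity of $V_{B_b}(\cdot,\lambda_1)$, and the definition of $V$ — shows that for $(x_1,\lambda_1),(x_2,\lambda_1)\in\Phi_{B_b}^{p_2=\lambda_1}$ and $c\in[0,1]$ the point $(cx_1+(1-c)x_2,\lambda_1)$ again lies in $\Phi_{B_b}^{p_2=\lambda_1}$, so this set is convex, i.e. an interval.

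The one genuinely new ingredient is pinning down \emph{which} end of the interval $[\lambda_0,\lambda_1]$ the set $\Phi_{B_b}^{p_2=\lambda_1}$ abuts. In Theorem \ref{thm:rho1} the anchor was $(\lambda_0,\lambda_0)\in\Phi_{B_b}$, which followed from symmetry of $\Phi_{B_b}$ (Theorem \ref{thm:PhiBbsym}) together with the observation that on the diagonal the balanced action is optimal. Here I would argue that $(\lambda_1,\lambda_1)\in\Phi_{B_b}$: at the belief $(\lambda_1,\lambda_1)$ both channels are equally (and maximally, within the feasible rectangle) likely to be good, so by symmetry of $V$ (Lemma \ref{lem:vbsem}) we have $V_{B_1}(\lambda_1,\lambda_1)=V_{B_2}(\lambda_1,\lambda_1)$, and the assumption $R_h < 2R_l$ forces the balanced action to dominate there — the immediate reward satisfies $2\lambda_1 R_l > \lambda_1 R_h$, and the continuation values can be compared using symmetry and monotonicity exactly as in the diagonal case of the earlier theorem. (If a cleaner route is available, one may instead invoke Theorem \ref{thm:PhiBbsym} to say $\Phi_{B_b}$ meets the diagonal and then push along the rectangle edge; either way the conclusion is that $(\lambda_1,\lambda_1)\in\Phi_{B_b}^{p_2=\lambda_1}$.) Since a convex subset of $[\lambda_0,\lambda_1]$ containing the right endpoint $\lambda_1$ is of the form $[\rho_2,\lambda_1]$ for some $\rho_2\in[\lambda_0,\lambda_1]$, we get $\forall p_1\in[\rho_2,\lambda_1],\ (p_1,\lambda_1)\in\Phi_{B_b}$, as claimed.

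The main obstacle is precisely the anchoring step: verifying rigorously that the balanced action is optimal at $(\lambda_1,\lambda_1)$ rather than merely plausible. The immediate-reward comparison is immediate from $R_h<2R_l$, but one must also control the discounted future terms; the clean way is to note that the next-belief distributions under $B_b$, $B_1$, $B_2$ started from $(\lambda_1,\lambda_1)$ are related by the symmetry of $V$ and by $T(\lambda_1)\le\lambda_1$, so that the balanced action's continuation value is at least as large — this is the same bookkeeping that underlies the convexity sandwich and should go through without new ideas. Everything else (linearity of $V_{B_b}$ on the line, the convexity sandwich, and the "convex subset of the real line is an interval" conclusion) is a direct transcription of the proof of Theorem \ref{thm:rho1} with $\lambda_0$ replaced by $\lambda_1$ and the interval oriented from the right.
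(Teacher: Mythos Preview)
Your proposal is correct and follows the same approach as the paper: establish convexity of $\Phi_{B_b}^{p_2=\lambda_1}$ via the sandwich argument of Theorem~\ref{thm:rho1}, then anchor the resulting interval at $(\lambda_1,\lambda_1)$. The paper's own proof is two sentences and justifies the anchor simply by ``the fact that $\Phi_{B_b}$ is symmetric''; you correctly flag this step as the one needing care, and your more detailed justification via $R_h<2R_l$ and the symmetry $V_{B_1}(\lambda_1,\lambda_1)=V_{B_2}(\lambda_1,\lambda_1)$ fills in what the paper leaves implicit.
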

\begin{proof}
Similar to proof of theorem \ref{thm:rho1}, we can show that $\Phi_{B_b}^{p_2=\lambda_1}$ is convex, therefore it is an interval on $p_2=\lambda_1$. Now since $(\lambda_1, \lambda_1)$ is in $\Phi_{B_b}$ from the fact that $\Phi_{B_b}$ is symmetric, there exists a threshold $\rho_2$, such that $\forall p_1 \in [\rho_2,\lambda_1], (p_1, \lambda_1) \in \Phi_{B_b}$.
\end{proof}

\begin{lem}
\label{lem:p2lambda0}
In case of $p_2=\lambda_0$, it is not optimal to take action $B_2$. In case of $p_2=\lambda_1$, it is not optimal to take action $B_1$.
\end{lem}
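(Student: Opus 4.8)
The plan is to show that on the part of the belief space that actually arises, the balanced action weakly dominates the action in question, so that it is never needed. Concretely, I would prove that for $p_1\in[\lambda_0,\lambda_1]$ one has $V_{B_b}(p_1,\lambda_0)\ge V_{B_2}(p_1,\lambda_0)$, and $V_{B_b}(p_1,\lambda_1)\ge V_{B_1}(p_1,\lambda_1)$; since after each channel has been used the belief lies in $[\lambda_0,\lambda_1]^2$ (the Lemma preceding Fig.~\ref{fig:decisionline}), this covers every state at which $p_2=\lambda_0$ or $p_2=\lambda_1$ can occur. The three ingredients are: (i) for every fixed $q$, the maps $p\mapsto V(p,q)$ and $p\mapsto V(q,p)$ are convex, because $V$ is a pointwise maximum of the $V_{B_i}$, each of which is affine in each coordinate (Lemmas~\ref{lem:linear}--\ref{lem:vbsem} give affinity, convexity and the symmetry $V(\lambda_0,\lambda_1)=V(\lambda_1,\lambda_0)$); and (ii) the elementary identity $T(p)=p\lambda_1+(1-p)\lambda_0$, i.e. $T(p)$ is the convex combination of $\lambda_0$ and $\lambda_1$ whose weight on $\lambda_1$ is exactly $p$.

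For the first statement, write $A=V(\lambda_0,\lambda_0)$, $B=V(\lambda_1,\lambda_0)=V(\lambda_0,\lambda_1)$, $C=V(\lambda_1,\lambda_1)$. From (\ref{eq:vbb2}), $V_{B_2}(p_1,\lambda_0)=\lambda_0 R_h+\beta[(1-\lambda_0)V(T(p_1),\lambda_0)+\lambda_0 V(T(p_1),\lambda_1)]$. Convexity in the first argument together with $T(p_1)=(1-p_1)\lambda_0+p_1\lambda_1$ gives $V(T(p_1),\lambda_0)\le(1-p_1)A+p_1B$ and $V(T(p_1),\lambda_1)\le(1-p_1)B+p_1C$. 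Substituting and collecting, the discounted part of $V_{B_2}(p_1,\lambda_0)$ is at most $\beta[(1-p_1)(1-\lambda_0)A+(p_1(1-\lambda_0)+(1-p_1)\lambda_0)B+p_1\lambda_0 C]$, which is exactly the discounted part of $V_{B_b}(p_1,\lambda_0)$ obtained from (\ref{eq:vbbb}) after merging $V(\lambda_1,\lambda_0)=V(\lambda_0,\lambda_1)=B$. Hence $V_{B_b}(p_1,\lambda_0)-V_{B_2}(p_1,\lambda_0)\ge(p_1+\lambda_0)R_l-\lambda_0 R_h\ge 2\lambda_0 R_l-\lambda_0 R_h=\lambda_0(2R_l-R_h)\ge 0$, where the second inequality uses $p_1\ge\lambda_0$ and the last uses $R_h<2R_l$. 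Thus $(p_1,\lambda_0)\in\Phi_{B_2}$ forces $(p_1,\lambda_0)\in\Phi_{B_b}$, so $B_2$ is never needed when $p_2=\lambda_0$.

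The second statement is handled by the mirror computation. From (\ref{eq:vbb1}), $V_{B_1}(p_1,\lambda_1)=p_1 R_h+\beta[(1-p_1)V(\lambda_0,T(\lambda_1))+p_1V(\lambda_1,T(\lambda_1))]$, and here $T(\lambda_1)=(1-\lambda_1)\lambda_0+\lambda_1\lambda_1$ has weight $\lambda_1$ on $\lambda_1$, so convexity in the second argument yields $V(\lambda_0,T(\lambda_1))\le(1-\lambda_1)A+\lambda_1B$ and $V(\lambda_1,T(\lambda_1))\le(1-\lambda_1)B+\lambda_1C$. Exactly as before, the resulting bound on the discounted part of $V_{B_1}(p_1,\lambda_1)$ coincides term-by-term with the discounted part of $V_{B_b}(p_1,\lambda_1)$, so $V_{B_b}(p_1,\lambda_1)-V_{B_1}(p_1,\lambda_1)\ge(p_1+\lambda_1)R_l-p_1R_h=\lambda_1R_l-p_1(R_h-R_l)\ge\lambda_1R_l-\lambda_1(R_h-R_l)=\lambda_1(2R_l-R_h)\ge 0$, using $p_1\le\lambda_1$ and $R_h<2R_l$. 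Hence $B_1$ is never needed when $p_2=\lambda_1$.

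The crux — and the only place something could fail — is the assertion that the discounted continuation terms \emph{cancel exactly} rather than merely being comparable; this is precisely what makes the crude convexity bound strong enough, and it relies on the mixing weight in $T(\cdot)$ coinciding with the branch probabilities in the $B_b$ update, together with the symmetry of $V$ that lets the two off-diagonal corner values be merged into a single $B$. The only other point to watch is the domain of $p_1$: the immediate-reward comparison needs $p_1\ge\lambda_0$ (resp. $p_1\le\lambda_1$), which is automatic on the reachable belief set but could fail for a pathological initial belief; I would therefore either state the lemma for $p_1\in[\lambda_0,\lambda_1]$ or remark that $p_2=\lambda_0$ (resp. $p_2=\lambda_1$) can only occur at such $p_1$ after the first slot. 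No new machinery beyond Lemmas~\ref{lem:linear}--\ref{lem:vbsem} is needed.
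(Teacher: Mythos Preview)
Your proposal is correct and follows essentially the same route as the paper: both prove $V_{B_b}(p_1,\lambda_0)\ge V_{B_2}(p_1,\lambda_0)$ (and the mirror inequality at $p_2=\lambda_1$) by bounding the continuation terms via convexity of $V$ together with $T(p)=p\lambda_1+(1-p)\lambda_0$, and then handling the immediate-reward difference via $p_1\ge\lambda_0$ and $R_h<2R_l$. Your use of the symmetry $V(\lambda_0,\lambda_1)=V(\lambda_1,\lambda_0)$ to collapse the two off-diagonal corner values into a single $B$ is a mild presentational streamlining, but the underlying argument is the paper's.
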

\begin{proof}
In case of $p_2=\lambda_0$, we need to prove that it is not optimal to take action $B_2$, \textit{i.e.}
\begin{align}\label{eq:ineq}
V_{B_2}(p_1,\lambda_0) &\leq V_{B_b}(p_1,\lambda_0) \ \  \text{or} \nonumber\\
V_{B_2}(p_1,\lambda_0)&\leq V_{B_1}(p_1,\lambda_0).
\end{align}
If one of the above inequalities holds, then the proof is complete. Because among three options, $B_2$ would be second or third then it's not optimal.
We will prove the first inequality as follows:
\begin{align}\label{eq:vb1}
V_{B_2}(p_1,\lambda_0)&=R_h \lambda_0+\beta \lambda_0 V(T(p_1),\lambda_1) \nonumber\\
&+\beta (1-\lambda_0)V(T(p_1),\lambda_0). \nonumber\\
V_{B_b}(p_1,\lambda_0)&=R_l \lambda_0+R_l p_1+\beta \lambda_0 p_1 V(\lambda_1,\lambda_1) \nonumber\\
&+\beta (1-\lambda_0)p_1 V(\lambda_1,\lambda_0) +\beta \lambda_0(1-p_1) V(\lambda_0,\lambda_1) \nonumber\\
&+\beta (1-\lambda_0)(1-p_1) V(\lambda_0,\lambda_0).
\end{align}
Then we have:
\begin{eqnarray}\label{eq:dif}
& &V_{B_b}(p_1,\lambda_0)-V_{B_2}(p_1,\lambda_0)\nonumber\\
&=&[R_l p_1-(R_h-R_l)\lambda_0] \nonumber\\
&+& \beta \lambda_0[p_1 V(\lambda_1,\lambda_1)+(1-p_1) V(\lambda_0,\lambda_1)-V(T(p_1),\lambda_1)] \nonumber\\
&+&\beta (1-\lambda_0)[p_1 V(\lambda_1,\lambda_0)+(1-p_1) V(\lambda_0,\lambda_0)\nonumber \\
&-&V(T(p_1),\lambda_0)]. \nonumber\\
\end{eqnarray}
For the first term of (\ref{eq:dif}) we have:
\begin{align}\label{eq:rl}
&R_l p_1-(R_h-R_l) \lambda_0\geq \lambda_0[2R_l-R_h]\geq 0.
\end{align}
In the above inequality, we use the fact that $p_1\geq \lambda_0$ and $R_h < 2R_l$.

Assume that at point $(T(p_1),\lambda_1)$, the action $B_i, i\in \{1,2,b\}$ is optimal. Then for the second term of (\ref{eq:dif}), we have:
\begin{eqnarray}\label{eq:l0}
& &p_1 V(\lambda_1,\lambda_1)+(1-p_1) V(\lambda_0,\lambda_1)-V(T(p_1),\lambda_1) \nonumber\\
&=&p_1 V(\lambda_1,\lambda_1)+(1-p_1) V(\lambda_0,\lambda_1)-V_{B_i}(T(p_1),\lambda_1) \nonumber\\
&\geq &p_1 V_{B_i}(\lambda_1,\lambda_1)+(1-p_1) V_{B_i}(\lambda_0,\lambda_1)-V_{B_i}(T(p_1),\lambda_1) \nonumber\\
&=& V_{B_i}(p_1 \lambda_1+(1-p_1)\lambda_0,\lambda_1)-V_{B_i}(T(p_1),\lambda_1) \nonumber\\
&=&V_{B_i}(T(p_1),\lambda_1)-V_{B_i}(T(p_1),\lambda_1)=0. \nonumber\\
\end{eqnarray}
The first inequality above is achieved from the fact that $V\geq V_{B_i}, i\in \{1,2,b\}$ and the equality after the inequality is from the linearity of $V_{B_i}, i\in \{1,2,b\}$ as in Lemma 1.

Similarly, for the third term of (\ref{eq:dif}), assume that at point $(T(p_1),\lambda_0)$, the action $B_j$ is optimal. Then we have:
\begin{eqnarray}\label{eq:1-l0}
& &p_1 V(\lambda_1,\lambda_0)+(1-p_1) V(\lambda_0,\lambda_0)-V(T(p_1),\lambda_0)\nonumber\\
&=&p_1 V(\lambda_1,\lambda_0)+(1-p_1) V(\lambda_0,\lambda_0)-V_{B_j}(T(p_1),\lambda_0)\nonumber\\
&\geq& p_1 V_{B_j}(\lambda_1,\lambda_0)+(1-p_1) V_{B_j}(\lambda_0,\lambda_0)-V_{B_j}(T(p_1),\lambda_0) \nonumber\\
&=& V_{B_j}(p_1 \lambda_1+(1-p_1)\lambda_0,\lambda_0)-V_{B_j}(T(p_1),\lambda_0) \nonumber\\
&=&V_{B_j}(T(p_1),\lambda_0)-V_{B_j}(T(p_1),\lambda_0)=0. \nonumber\\
\end{eqnarray}
Now using (\ref{eq:rl}), (\ref{eq:l0}) and (\ref{eq:1-l0}) in (\ref{eq:dif}), we have:
\begin{equation}\label{eq:pos}
V_{B_b}(p_1,\lambda_0)-V_{B_2}(p_1,\lambda_0)\geq 0.
\end{equation}
(\ref{eq:pos}) means that $B_2$ never can be optimal on the border of  $(p_1,\lambda_0)$.

Similar arguments can be used to prove that $V_{B_1}(p_1,\lambda_1) \leq V_{B_b}(p_1, \lambda_1)$, thus $B_1$ is not optimal on the border of $(p_1, \lambda_1)$. Then the proof is complete.
\end{proof}
\subsection*{C: The structure of the optimal policy}

\begin{thm}
\label{thm:structure}
The optimal policy has a simple threshold structure and can be described as follows (Figure \ref{fig:optimalpolicy}):
\[ \pi^*(p_1,\lambda_0)= \left \{
\begin{array}{rl}
B_b, & \quad \mbox{if} \quad \lambda_0 \leq p_1 \leq \rho_1\\
B_1, & \quad \mbox{if} \quad   \rho_1 < p_1 \leq \lambda_1\\
\end{array} ,   \ \ \ \   (a) \right. \]
\[ \pi^*(p_1,\lambda_1)= \left \{
\begin{array}{rl}
B_b, & \quad \mbox{if} \quad \rho_2 \leq p_1 \leq \lambda_1\\
B_2, & \quad \mbox{if} \quad   \lambda_0 \leq p_1 < \rho_2\\
\end{array} , \ \ \ \   (b)  \right. \]
\[ \pi^*(\lambda_0,p_2)= \left \{
\begin{array}{rl}
B_b, & \quad \mbox{if} \quad \lambda_0 \leq p_2 \leq \rho_1\\
B_2, & \quad \mbox{if} \quad   \rho_1 < p_2 \leq \lambda_1\\
\end{array} , \ \ \ \    (c)  \right. \]
\[ \pi^*(\lambda_1,p_2)= \left \{
\begin{array}{rl}
B_b, & \quad \mbox{if} \quad \rho_2 \leq p_2 \leq \lambda_1\\
B_1, & \quad \mbox{if} \quad   \lambda_0 \geq p_2 < \rho_1\\
\end{array} . \ \ \ \   (d) \right. \]
\begin{equation}\label{eqn:structure}
\end{equation}
\end{thm}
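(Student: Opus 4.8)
The plan is to reduce the statement to a single edge of the reachable rectangle, prove it there, and then propagate it around the remaining three edges via the symmetry results already established. Concretely, I would prove part~(a) directly, prove part~(b) by the same argument on the opposite edge $p_2=\lambda_1$, and obtain parts~(c) and~(d) by reflecting parts~(a) and~(b) across the diagonal $p_1=p_2$.

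For part~(a), fix the edge $p_2=\lambda_0$, $p_1\in[\lambda_0,\lambda_1]$, which together with the other three sides contains every belief reachable after each channel has been used once (preceding Lemma). The first step is to eliminate $B_2$: Lemma~\ref{lem:p2lambda0} gives $V_{B_2}(p_1,\lambda_0)\le V_{B_b}(p_1,\lambda_0)$ on this edge, so the optimal action is $B_b$ or $B_1$. The second step is to invoke the proof of Theorem~\ref{thm:rho1}, which shows not merely that some $\rho_1$ exists but that $\Phi_{B_b}^{p_2=\lambda_0}$ is exactly the interval $[\lambda_0,\rho_1]$. Hence $B_b$ is optimal precisely for $p_1\in[\lambda_0,\rho_1]$. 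For $p_1\in(\rho_1,\lambda_1]$ we then have $V_{B_b}(p_1,\lambda_0)<V(p_1,\lambda_0)$, and therefore also $V_{B_2}(p_1,\lambda_0)<V(p_1,\lambda_0)$; since $V=\max\{V_{B_b},V_{B_1},V_{B_2}\}$ pointwise and this maximum is attained, it must be attained by $V_{B_1}$, so $\pi^*(p_1,\lambda_0)=B_1$ there. Part~(b) is the mirror image on $p_2=\lambda_1$: Lemma~\ref{lem:p2lambda0} now excludes $B_1$, the proof of Theorem~\ref{thm:rho2} gives $\Phi_{B_b}^{p_2=\lambda_1}=[\rho_2,\lambda_1]$, and on $[\lambda_0,\rho_2)$ the elimination of $B_b$ and $B_1$ forces $B_2$.

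For parts~(c) and~(d) I would reflect across $p_1=p_2$ using Theorem~\ref{thm:PhiBbsym}, Theorem~\ref{thm:phib1b2sym}, and the symmetry of $V$ (Lemma~\ref{lem:vbsem}). On the edge $p_1=\lambda_0$ this yields $(\lambda_0,p_2)\in\Phi_{B_b}\iff(p_2,\lambda_0)\in\Phi_{B_b}\iff p_2\in[\lambda_0,\rho_1]$ by part~(a), while $(p_2,\lambda_0)\in\Phi_{B_1}\Rightarrow(\lambda_0,p_2)\in\Phi_{B_2}$; combined with $V_{B_1}(\lambda_0,p_2)=V_{B_2}(p_2,\lambda_0)\le V_{B_b}(p_2,\lambda_0)=V_{B_b}(\lambda_0,p_2)$, which shows $B_1$ can never be optimal on this edge, this gives~(c) with the \emph{same} threshold $\rho_1$. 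The identical reflection of part~(b) gives~(d) with threshold $\rho_2$. Note that the symmetry of $\Phi_{B_b}$ is precisely what makes the thresholds on the $p_1=\lambda_0$ and $p_2=\lambda_0$ edges coincide, and likewise on the two $\lambda_1$ edges.

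The main obstacle is the single non-bookkeeping step, the implication ``neither $B_b$ nor $B_2$ optimal $\Rightarrow$ $B_1$ optimal''. It rests on (i) the \emph{exact} identity $\Phi_{B_b}^{p_2=\lambda_0}=[\lambda_0,\rho_1]$, so that $V_{B_b}$ lies \emph{strictly} below $V$ for $p_1>\rho_1$ --- this is why one must appeal to the convexity argument inside the proof of Theorem~\ref{thm:rho1} rather than to its bare statement --- and (ii) the elementary fact that a pointwise maximum of finitely many functions is realized by one of them, so discarding two of the three candidate actions identifies the third. Everything else is routine transcription of the symmetry relations already proved.
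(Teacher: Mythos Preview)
Your proposal is correct and follows essentially the same route as the paper: eliminate the ``wrong'' betting action on each horizontal edge via Lemma~\ref{lem:p2lambda0}, use Theorems~\ref{thm:rho1} and~\ref{thm:rho2} to pin down the $B_b$ interval, conclude that the complement must be the remaining betting action, and then transport the result to the vertical edges by the symmetry Theorems~\ref{thm:phib1b2sym} and~\ref{thm:PhiBbsym}. Your remark that one needs the \emph{interval identity} $\Phi_{B_b}^{p_2=\lambda_0}=[\lambda_0,\rho_1]$ from inside the proof of Theorem~\ref{thm:rho1}, rather than its bare existence statement, is a valid sharpening that the paper leaves implicit.
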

\begin{proof}Let us first consider the border of $(p_1, \lambda_0)$. From Lemma \ref{lem:p2lambda0} we understand that on this border $B_2$ is not optimal, therefore the optimal action can only be $B_b$ or $B_1$. Furthermore from Theorem \ref{thm:rho1} we know that the decision region on this border for $B_b$ is the interval represented by $\lambda_0 \leq  p_1 \leq \rho_1$, it follows directly that the remaining part of this border must belong to the decision region of $B_1$. Thus we have (\ref{eqn:structure})(a).

(\ref{eqn:structure})(b) specifies the optimal action on the border of $(p_1, \lambda_1)$. Similar to the (\ref{eqn:structure})(a), it is directly obtained from Theorem \ref{thm:rho2} and Lemma \ref{lem:p2lambda0}.

(\ref{eqn:structure})(c) specifies the optimal action on the border of $(\lambda_0, p_2)$. It is directly obtained based on the result on the border of $(p_1, \lambda_0)$ (i.e. (\ref{eqn:structure}) (a)) using Theorems \ref{thm:phib1b2sym} and \ref{thm:PhiBbsym}. Specifically, from Theorem \ref{thm:PhiBbsym} we know that the decision region of $B_b$ is symmetric with respect to the line $p_1=p_2$, thus we have the first term of (\ref{eqn:structure})(c) from the first term of (\ref{eqn:structure})(a). Similarly, from Theorem \ref{thm:phib1b2sym} we know the decision regions of $B_1$ and $B_2$ are mirrors with respect to the line $p_1=p_2$, therefore we get the second term of (\ref{eqn:structure})(c) from the second term of (\ref{eqn:structure})(a).

(\ref{eqn:structure})(d) specifies the optimal action on the border of $(\lambda_1, p_2)$. It is directly obtained based on the result on the border of $(p_1, \lambda_1)$ (i.e. (\ref{eqn:structure}) (b)) using Theorems \ref{thm:phib1b2sym} and \ref{thm:PhiBbsym}.
\end{proof}

\begin{figure}[]
  \centering
  \includegraphics[width=3.5in]{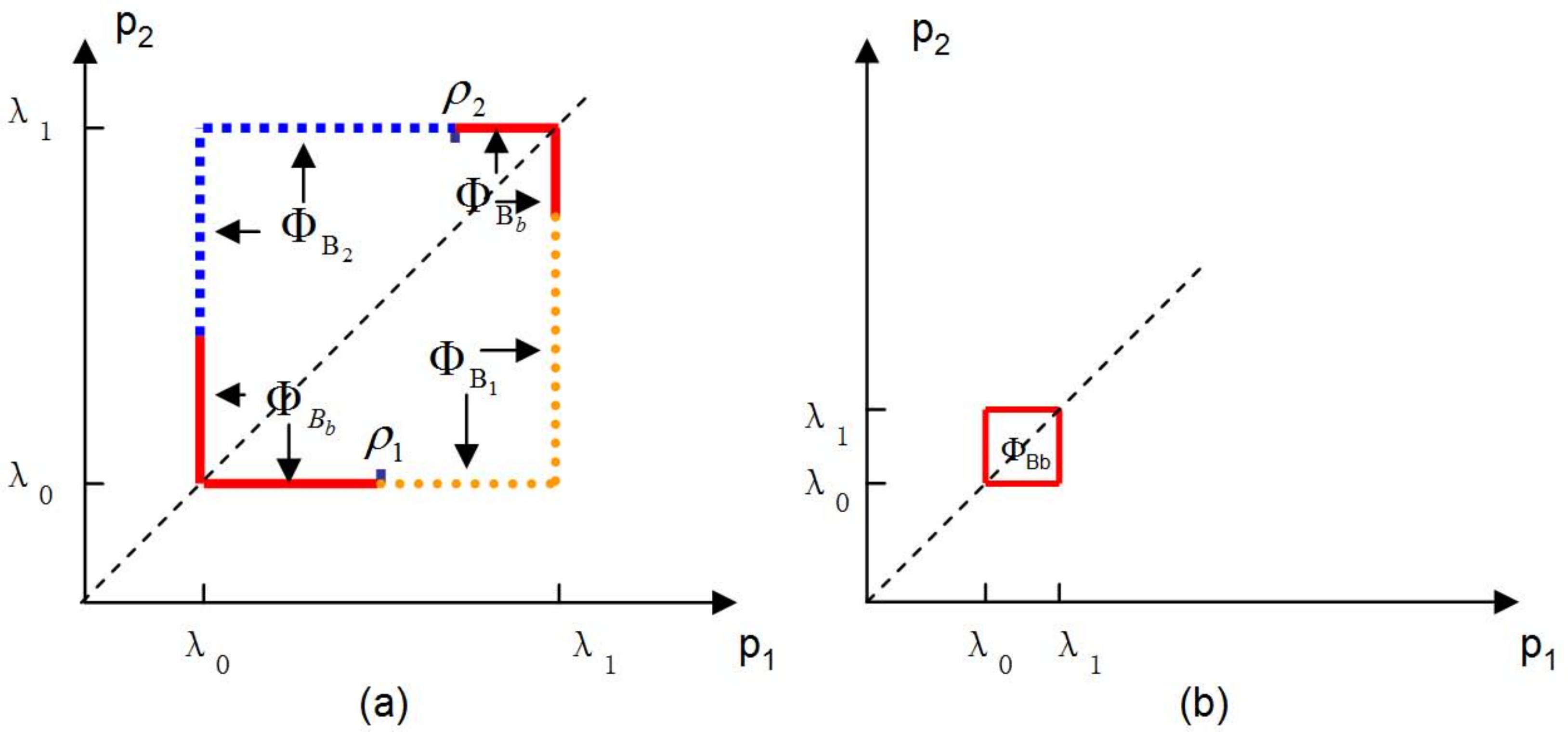}
  \caption{Structure of optimal policy.}
  \label{fig:optimalpolicy}
\end{figure}


From the above analysis we understand that the optimal policy has a simple threshold structure. And it is critical to find the two thresholds $\rho_1$ and $\rho_2$.

\begin{thm}\label{thm:rho1calculation}
Let $\delta_{i,j}(k_1,k_2)=V_{B_i}(k_1,k_2)-V_{B_j}(k_1,k_2), (i \in \{1,2,b\}, j \in \{1,2,b \})$, $\rho_1$ can be calculated as follows

1) if $T(\lambda_0) < \rho_2$, $T(\lambda_0) \leq \rho_1$
\begin{equation}\label{eq:rho1cal1}
\rho_1=\frac{\lambda_0 R_l+\beta \lambda_0 \delta_{2,b}(\lambda_0, \lambda_1)}{R_h-R_l+\beta \lambda_0 (\delta_{1,b}(\lambda_1,\lambda_1)+\delta_{2,b}(\lambda_0, \lambda_1))},
\end{equation}

2) if $T(\lambda_0) < \rho_2$, $T(\lambda_0) > \rho_1$
\begin{equation}\label{eq:rho1cal2}
\rho_1=\frac{\lambda_0 R_l+\beta (1-\lambda_0) \delta_{b,2}(\lambda_0, \lambda_0)}{R_h-R_l+\beta \lambda_0 \delta_{1,b}(\lambda_1,\lambda_1)+ \beta (1-\lambda_0)\delta_{b,2}(\lambda_0, \lambda_0)},
\end{equation}

3) if $T(\lambda_0) \geq \rho_2$, $T(\lambda_0) \leq \rho_1$
\begin{equation}\label{eq:rho1cal3}
\rho_1=\frac{\lambda_0 R_l+\beta \lambda_0 \delta_{2,b}(\lambda_0, \lambda_1)}{R_h-R_l+\beta \lambda_0 \delta_{2,b}(\lambda_0,\lambda_1)+ \beta (1-\lambda_0)\delta_{b,1}(\lambda_1, \lambda_0)},
\end{equation}

4) if $T(\lambda_0) \geq \rho_2$, $T(\lambda_0) > \rho_1$, $\rho_1$ is calculated in (\ref{eq:rho1cal4}).
\begin{figure*}
\begin{equation}\label{eq:rho1cal4}
\rho_1=\frac{\lambda_0 R_l+\beta \lambda_0 \delta_{2,1}(\lambda_0, \lambda_1)+\beta (1-\lambda_0) \delta_{b,1}(\lambda_0, \lambda_0)}{R_h-R_l+\beta \lambda_0 \delta_{2,1}(\lambda_0,\lambda_1)+ \beta (1-\lambda_0)(\delta_{b,1}(\lambda_1, \lambda_0)+\delta_{b,1}(\lambda_0, \lambda_0))}.
\end{equation}
\end{figure*}
\end{thm}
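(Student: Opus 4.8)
The plan is to locate $\rho_1$ as the unique belief on the border $p_2 = \lambda_0$ at which the two competing actions, $B_b$ and $B_1$, yield equal value. By Theorem \ref{thm:structure}(a) we already know that on this border the optimal action is $B_b$ for $p_1 \in [\lambda_0,\rho_1]$ and $B_1$ for $p_1 \in (\rho_1,\lambda_1]$, and by Lemma \ref{lem:linear} both $V_{B_b}(p_1,\lambda_0)$ and $V_{B_1}(p_1,\lambda_0)$ are affine in $p_1$. Hence the function $\delta_{b,1}(p_1,\lambda_0) = V_{B_b}(p_1,\lambda_0) - V_{B_1}(p_1,\lambda_0)$ is affine in $p_1$, nonnegative on $[\lambda_0,\rho_1]$ and nonpositive on $[\rho_1,\lambda_1]$; therefore $\rho_1$ is exactly the root of $\delta_{b,1}(p_1,\lambda_0)=0$, i.e.
\[
\rho_1 = \frac{-\,[\,\delta_{b,1}(p_1,\lambda_0)\,]\big|_{p_1=0}}{\text{(slope of }\delta_{b,1}(\cdot,\lambda_0))}.
\]
So the whole theorem reduces to writing the intercept and the slope of this affine function explicitly.

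To do that, I would start from the expressions
\[
V_{B_b}(p_1,\lambda_0) = p_1 R_l + \lambda_0 R_l + \beta\big[(1-p_1)(1-\lambda_0)V(\lambda_0,\lambda_0) + p_1(1-\lambda_0)V(\lambda_1,\lambda_0) + (1-p_1)\lambda_0 V(\lambda_0,\lambda_1) + p_1\lambda_0 V(\lambda_1,\lambda_1)\big]
\]
and
\[
V_{B_1}(p_1,\lambda_0) = p_1 R_h + \beta\big[(1-p_1)V(\lambda_0,T(\lambda_0)) + p_1 V(\lambda_1,T(\lambda_0))\big],
\]
using $T(p_2)=T(\lambda_0)$ here because $p_2=\lambda_0$. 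The key simplification is that by Theorem \ref{thm:structure} and the symmetry Theorems \ref{thm:phib1b2sym}, \ref{thm:PhiBbsym}, each of the four ``next-step'' values appearing here — $V(\lambda_1,\lambda_0)$, $V(\lambda_0,\lambda_1)$, $V(\lambda_0,T(\lambda_0))$, $V(\lambda_1,T(\lambda_0))$ — can be replaced by a specific $V_{B_i}$, with the choice of $i$ dictated precisely by whether $T(\lambda_0)$ lies above or below $\rho_1$ (governing the point $(\lambda_0,T(\lambda_0))$, equivalently by symmetry the point $(T(\lambda_0),\lambda_0)$) and whether $T(\lambda_0)$ lies above or below $\rho_2$ (governing $(\lambda_1,T(\lambda_0))$ via Theorem \ref{thm:structure}(d)). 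This is exactly the source of the four cases in the statement: case~1 is $T(\lambda_0)<\rho_2$ and $T(\lambda_0)\le\rho_1$; case~2 is $T(\lambda_0)<\rho_2$, $T(\lambda_0)>\rho_1$; case~3 is $T(\lambda_0)\ge\rho_2$, $T(\lambda_0)\le\rho_1$; case~4 is $T(\lambda_0)\ge\rho_2$, $T(\lambda_0)>\rho_1$. In each case I substitute the appropriate $V_{B_i}$ for each next-step value, collect the resulting expression as (intercept) $+$ (slope)$\cdot p_1$, and solve the affine equation; writing the differences in terms of the $\delta_{i,j}$ notation then yields (\ref{eq:rho1cal1})--(\ref{eq:rho1cal4}) respectively. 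For instance, in case~1 one uses $V(\lambda_0,T(\lambda_0)) = V_{B_b}(\lambda_0,T(\lambda_0))$ (since $T(\lambda_0)\le\rho_1$ puts it in $\Phi_{B_b}$) and $V(\lambda_1,T(\lambda_0)) = V_{B_1}(\lambda_1,T(\lambda_0))$ (since $T(\lambda_0)<\rho_2$, by part (d) the point $(\lambda_1,T(\lambda_0))$ lies in $\Phi_{B_1}$), while $V(\lambda_1,\lambda_0)$ and $V(\lambda_0,\lambda_1)$ themselves are already expressed through $\Phi_{B_b}$-values via $\rho_1,\rho_2$; matching coefficients produces the numerator $\lambda_0 R_l + \beta\lambda_0\,\delta_{2,b}(\lambda_0,\lambda_1)$ and the denominator $R_h - R_l + \beta\lambda_0(\delta_{1,b}(\lambda_1,\lambda_1) + \delta_{2,b}(\lambda_0,\lambda_1))$, which is (\ref{eq:rho1cal1}).

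The main obstacle is bookkeeping rather than conceptual depth: one has to be careful about \emph{which} $V_{B_i}$ replaces each of the boundary values, and in particular to use the mirror-symmetry (Theorem \ref{thm:phib1b2sym}) correctly when a point like $(\lambda_0,T(\lambda_0))$ is compared against the $\rho_1$-threshold that was defined along the $p_2=\lambda_0$ border rather than the $p_1=\lambda_0$ border. A secondary subtlety is that the formulas are implicit: $\rho_1$ appears (through the $\delta$'s involving $V(\lambda_1,\lambda_1)$, $V(\lambda_0,\lambda_1)$, etc.) on both sides in general, so strictly speaking (\ref{eq:rho1cal1})--(\ref{eq:rho1cal4}) express $\rho_1$ in terms of value-function quantities that still depend on the thresholds; I would note that these are to be read as fixed-point characterizations, consistent with the later linear-programming approach for computing $V$ numerically. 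Finally, one should check that in each case the computed $\rho_1$ indeed falls in $[\lambda_0,\lambda_1]$ and is consistent with the case hypothesis on $T(\lambda_0)$; since by Theorem \ref{thm:rho1} such a $\rho_1$ is already known to exist and the affine function $\delta_{b,1}(\cdot,\lambda_0)$ has at most one root (unless it is identically zero, a degenerate case handled by taking $\rho_1$ at the interval endpoint), uniqueness and well-definedness follow, completing the proof.
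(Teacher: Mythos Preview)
Your approach is essentially the paper's: set $V_{B_b}(\rho_1,\lambda_0)=V_{B_1}(\rho_1,\lambda_0)$, use Theorem~\ref{thm:structure} to replace each next-step value $V(\cdot,\cdot)$ by the appropriate $V_{B_i}(\cdot,\cdot)$ under the case hypotheses on $T(\lambda_0)$ versus $\rho_1,\rho_2$, and solve the resulting linear equation in $\rho_1$. The one computational ingredient you leave implicit but the paper makes explicit is the affine-in-$p_2$ expansion from Lemma~\ref{lem:linear} together with $T(\lambda_0)=\lambda_0\lambda_1+(1-\lambda_0)\lambda_0$, giving e.g.\ $V_{B_1}(\lambda_1,T(\lambda_0))=\lambda_0 V_{B_1}(\lambda_1,\lambda_1)+(1-\lambda_0)V_{B_1}(\lambda_1,\lambda_0)$, which is precisely what converts the $T(\lambda_0)$ arguments into the corner quantities $\delta_{i,j}(\lambda_k,\lambda_m)$ appearing in (\ref{eq:rho1cal1})--(\ref{eq:rho1cal4}).
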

\begin{proof}
We will prove (\ref{eq:rho1cal1}) and the rest of the theorem can be shown in a similar manner. From Theorem \ref{thm:structure} we know that at the point $(\rho_1, \lambda_0)$
\begin{equation}
V_{B_1}(\rho_1, \lambda_0) = V_{B_b}(\rho_1, \lambda_0).
\end{equation}
Using (\ref{eq:vbb1}) and (\ref{eq:vbbb}), the above is written as
\begin{eqnarray}\label{eq:rho1proof1}
&&\rho_1 R_h + \rho_1 \beta V(\lambda_1, T(\lambda_0))+(1-\rho_1)\beta V(\lambda_0, T(\lambda_0))= \nonumber \\
&&\rho_1 R_l + \lambda_0 R_l + \beta \rho_1 \lambda_0 V(\lambda_1, \lambda_1)+\beta (1 -\rho_1) \lambda_0 V(\lambda_0, \lambda_1) \nonumber \\
&&+\beta \rho_1 (1-\lambda_0) V(\lambda_1, \lambda_0)+\beta (1- \rho_1)(1- \lambda_0) V(\lambda_0, \lambda_0). \nonumber \\
\end{eqnarray}
From Theorem \ref{thm:structure} and the condition that $T(\lambda_0) < \rho_2$, $T(\lambda_0) \leq \rho_1$, we have $V(\lambda_1,T(\lambda_0))=V_{B_1}(\lambda_1,T(\lambda_0))$, $V(\lambda_0,T(\lambda_0))=V_{B_b}(\lambda_0,T(\lambda_0))$ under the assumption that $\rho_1 \geq T(\lambda_0)$, $V(\lambda_1,\lambda_1)=V_{B_b}(\lambda_1,\lambda_1)$, $V(\lambda_0,\lambda_1)=V_{B_2}(\lambda_0,\lambda_1)$, $V(\lambda_1,\lambda_0)=V_{B_1}(\lambda_1,\lambda_0)$,  $V(\lambda_0,\lambda_0)=V_{B_b}(\lambda_0,\lambda_0)$.

Thus (\ref{eq:rho1proof1}) can be written as
\begin{align}\label{eq:rho1proof2}
\rho_1 R_h + \rho_1 \beta V_{B_1}(\lambda_1, T(\lambda_0))+(1-\rho_1)\beta V_{B_b}(\lambda_0, T(\lambda_0))= \nonumber \\
\rho_1 R_l + \lambda_0 R_l + \beta \rho_1 \lambda_0 V_{B_b}(\lambda_1, \lambda_1)+\beta (1 -\rho_1) \lambda_0 V_{B_2}(\lambda_0, \lambda_1) \nonumber \\
+\beta \rho_1 (1-\lambda_0) V_{B_1}(\lambda_1, \lambda_0)+\beta (1- \rho_1)(1- \lambda_0) V_{B_b}(\lambda_0, \lambda_0). \nonumber \\
\end{align}

Using the linearity of $V_{B_1}$ and $V_{B_b}$ in $p_2$, and the fact that $T(\lambda_0)=\lambda_0 \lambda_1 + (1-\lambda_0) \lambda_0$, we have
\begin{eqnarray}\label{eq:rho1proof3}
V_{B_1}(\lambda_1, T(\lambda_0))=\lambda_0 V_{B_1}(\lambda_1,\lambda_1)+(1-\lambda_0)V_{B_1}(\lambda_1, \lambda_0),\nonumber \\
V_{B_b}(\lambda_0, T(\lambda_0))=\lambda_0 V_{B_b}(\lambda_0,\lambda_1)+(1-\lambda_0)V_{B_b}(\lambda_0, \lambda_0).
\end{eqnarray}

Substitute $V_{B_1}(\lambda_1, T(\lambda_0))$ and $V_{B_b}(\lambda_0, T(\lambda_0))$ in (\ref{eq:rho1proof2}) with (\ref{eq:rho1proof3}), with simple manipulation we have formula (\ref{eq:rho1cal1}).
\end{proof}

\begin{thm}\label{thm:rho2calculation}
Let $\delta_{i,j}(k_1,k_2)=V_{B_i}(k_1,k_2)-V_{B_j}(k_1,k_2), (i \in \{1,2,b\}, j \in \{1,2,b \})$, the threshold $\rho_2$ is calculated as follows

1) if $T(\rho_2) \geq \rho_2 $ and $T(\rho_2) > \rho_1$
\begin{equation}
\rho_2=\frac{\lambda_1(R_h-R_l)-\beta \lambda_1 \delta_{2,b}(\lambda_0,\lambda_1)-\beta(1-\lambda_1)\delta_{b,1}(\lambda_0,
\lambda_0)}{R_l - \beta \lambda_1 \delta_{2,b}(\lambda_0,\lambda_1)-\beta(1-\lambda_1)\delta_{b,1}(\lambda_0, \lambda_0)},
\end{equation}

2)  if $T(\rho_2) \geq \rho_2 $ and $T(\rho_2) \leq \rho_1$
\begin{equation}
\rho_2=\frac{\lambda_1(R_h-R_l)-\beta \lambda_1 \delta_{2,b}(\lambda_0,\lambda_1))}{R_l - \beta \lambda_1 \delta_{2,b}(\lambda_0,\lambda_1)-\beta(1-\lambda_1)\delta_{b,1}(\lambda_1, \lambda_0)},
\end{equation}

3) if $T(\rho_2) < \rho_2$, $T(\rho_2) > \rho_1$
\begin{equation}
\rho_2=\frac{\lambda_1(R_h-R_l)-\beta (1-\lambda_1) \delta_{b,1}(\lambda_0,\lambda_0))}{R_l - \beta \lambda_1 \delta_{2,b}(\lambda_1,\lambda_1)-\beta(1-\lambda_1)\delta_{b,1}(\lambda_0, \lambda_0)},
\end{equation}

4) if $T(\rho_2) < \rho_2$, $T(\rho_2) \leq \rho_1$
\begin{equation}
\rho_2=\frac{\lambda_1(R_h-R_l)}{R_l - \beta \lambda_1 \delta_{2,b}(\lambda_1,\lambda_1)-\beta(1-\lambda_1)\delta_{b,1}(\lambda_1, \lambda_0)}.
\end{equation}

\end{thm}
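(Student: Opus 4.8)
The plan is to mirror the proof of Theorem \ref{thm:rho1calculation} but on the border $p_2 = \lambda_1$ rather than $p_2 = \lambda_0$. By Theorem \ref{thm:structure}(b), the threshold $\rho_2$ is characterized by the indifference equation at the point $(\rho_2, \lambda_1)$ between the two actions that can be optimal on this border, namely $B_b$ and $B_2$; that is, $V_{B_b}(\rho_2, \lambda_1) = V_{B_2}(\rho_2, \lambda_1)$. First I would write out both sides explicitly using the evolution equations (\ref{eq:vbbb}) and (\ref{eq:vbb2}) specialized to $p_1 = \rho_2$, $p_2 = \lambda_1$. In $V_{B_b}(\rho_2,\lambda_1)$ the successor beliefs are the four corners $(\lambda_0,\lambda_0),(\lambda_1,\lambda_0),(\lambda_0,\lambda_1),(\lambda_1,\lambda_1)$, while in $V_{B_2}(\rho_2,\lambda_1)$ the successor beliefs are $(T(\rho_2), \lambda_0)$ and $(T(\rho_2),\lambda_1)$, since channel~2 is played and turns out good with probability $\lambda_1$ and bad with probability $1-\lambda_1$.

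Next, as in the proof of Theorem \ref{thm:rho1calculation}, I would replace each occurrence of $V(\cdot,\cdot)$ by the appropriate $V_{B_i}(\cdot,\cdot)$ using the already-established structure of the optimal policy. For the corner points this is forced: by Lemma \ref{lem:p2lambda0} and Theorems \ref{thm:rho1}--\ref{thm:rho2} one has $V(\lambda_0,\lambda_0)=V_{B_b}$, $V(\lambda_1,\lambda_0)=V_{B_1}$, $V(\lambda_0,\lambda_1)=V_{B_2}$, $V(\lambda_1,\lambda_1)=V_{B_b}$ (using symmetry, Theorem \ref{thm:phib1b2sym}). For the two points $(T(\rho_2),\lambda_0)$ and $(T(\rho_2),\lambda_1)$ the identification depends on where $T(\rho_2)$ falls relative to the thresholds $\rho_1$ (on the border $p_2=\lambda_0$) and $\rho_2$ (on the border $p_2=\lambda_1$), which is exactly the reason four cases arise: $T(\rho_2) \gtrless \rho_2$ decides whether $V(T(\rho_2),\lambda_1)$ equals $V_{B_b}$ or $V_{B_2}$, and $T(\rho_2) \gtrless \rho_1$ decides whether $V(T(\rho_2),\lambda_0)$ equals $V_{B_b}$ or $V_{B_1}$. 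After substituting, I would use the affine linearity of each $V_{B_i}$ in $p_2$ (Lemma \ref{lem:linear}) together with $T(\rho_2) = \rho_2\lambda_1 + (1-\rho_2)\lambda_0$ to decompose $V_{B_i}(T(\rho_2),\lambda_j)$ as a convex combination $\rho_2 V_{B_i}(\lambda_1,\lambda_j) + (1-\rho_2)V_{B_i}(\lambda_0,\lambda_j)$, wait --- more precisely I would need the linearity in the \emph{first} argument here; since in these terms it is $p_1 = T(\rho_2)$ that varies, I use that $V_{B_i}$ is affine in $p_1$, which Lemma \ref{lem:linear} also provides. This turns the indifference equation into a scalar linear equation in $\rho_2$ whose coefficients are differences of the form $\delta_{i,j}(k_1,k_2) = V_{B_i}(k_1,k_2) - V_{B_j}(k_1,k_2)$ evaluated at corner points.

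Finally, I would collect the $\rho_2$-terms on one side and solve, obtaining the stated closed forms; matching the grouping of terms to the four combinations of $\{T(\rho_2)\geq\rho_2, T(\rho_2)<\rho_2\}\times\{T(\rho_2)>\rho_1, T(\rho_2)\leq\rho_1\}$ gives the four displayed formulas. I expect the main obstacle to be purely bookkeeping: keeping straight which $V_{B_i}$ labels attach to the points $(T(\rho_2),\lambda_0)$ and $(T(\rho_2),\lambda_1)$ in each case and carefully collecting the coefficient of $\rho_2$ after the linear expansion, since a sign error in any $\delta_{i,j}$ propagates into the final fraction. One should also note the implicit consistency check --- the formula is only valid when the resulting $\rho_2$ indeed satisfies the case hypothesis on $T(\rho_2)$ --- but this is the same caveat already present in Theorem \ref{thm:rho1calculation} and does not affect the derivation. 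Since the algebra is entirely parallel to that of Theorem \ref{thm:rho1calculation}, I would present case~1 in full and remark that cases~2--4 follow by the identical argument with the appropriate substitutions.
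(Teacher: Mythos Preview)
Your proposal is correct and matches the paper's own approach: the paper simply states that the proof of Theorem~\ref{thm:rho2calculation} is similar to that of Theorem~\ref{thm:rho1calculation} and omits it, and your plan is precisely to mirror that argument on the border $p_2=\lambda_1$ with the indifference equation $V_{B_b}(\rho_2,\lambda_1)=V_{B_2}(\rho_2,\lambda_1)$, the same corner-value substitutions, and the same affine-linearity decomposition. Your identification of the four cases via the position of $T(\rho_2)$ relative to $\rho_1$ and $\rho_2$, and your observation that here one needs the affinity of $V_{B_i}$ in the \emph{first} argument, are both accurate refinements that the paper leaves implicit.
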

The proof of this theorem is similar to that of theorem \ref{thm:rho1calculation} and is omitted here.


\section{Simulation based on linear programming}

Linear programming is one of the approaches to solve the Bellman's equation in (\ref{eq:bellman}). Based on \cite{farias2002}, we model our problem as the following linear program:

\begin{center}
\begin{eqnarray}
\min \sum_{\vec{p} \in \mathbb{X}}{V(\vec{p})}, & &\nonumber \\
\text{s.t. \ \ }  g_a(\vec{p})&+& \beta \sum_{\vec{y} \in \mathbb{X}}f_a(\vec{p},\vec{y})V(\vec{y})  \leq  V(\vec{p}), \nonumber\\
& & \forall \vec{p} \in \mathbb{X},  \forall a \in \mathbb{A}_{\vec{p}},
\label{eqn:vlp}
\end{eqnarray}
\end{center}
where $\mathbb{X}$ is the space of belief state, $\mathbb{A}_{\vec{p}}$ is the set of available actions for state $\vec{p}$. The state transition probabilities $f_a(\vec{p},\vec{y})$ is the probability that the next state will be $\vec{y}$ given that the current state is $\vec{p}$ and the current action is $a \in \mathbb{A}_{\vec{p}}$. The optimal policy can be generated according to
\begin{equation}
\pi(\vec{p})=\arg \max_{a \in \mathbb{A}_{\vec{p}}}(g_a(\vec{p})+\beta\sum_{\vec{y} \in \mathbb{X}}f_a(\vec{p},\vec{y})V(\vec{y})).
\label{eqn:policy}
\end{equation}

We used the LOQO solver on NEOS Server \cite{neos} with AMPL input \cite{ampl} to obtain the solution of equation (\ref{eqn:vlp}). Then we used MATLAB to construct the policy according to equation (\ref{eqn:policy}).

\begin{figure}
  \centering
  \includegraphics[width=2.5in]{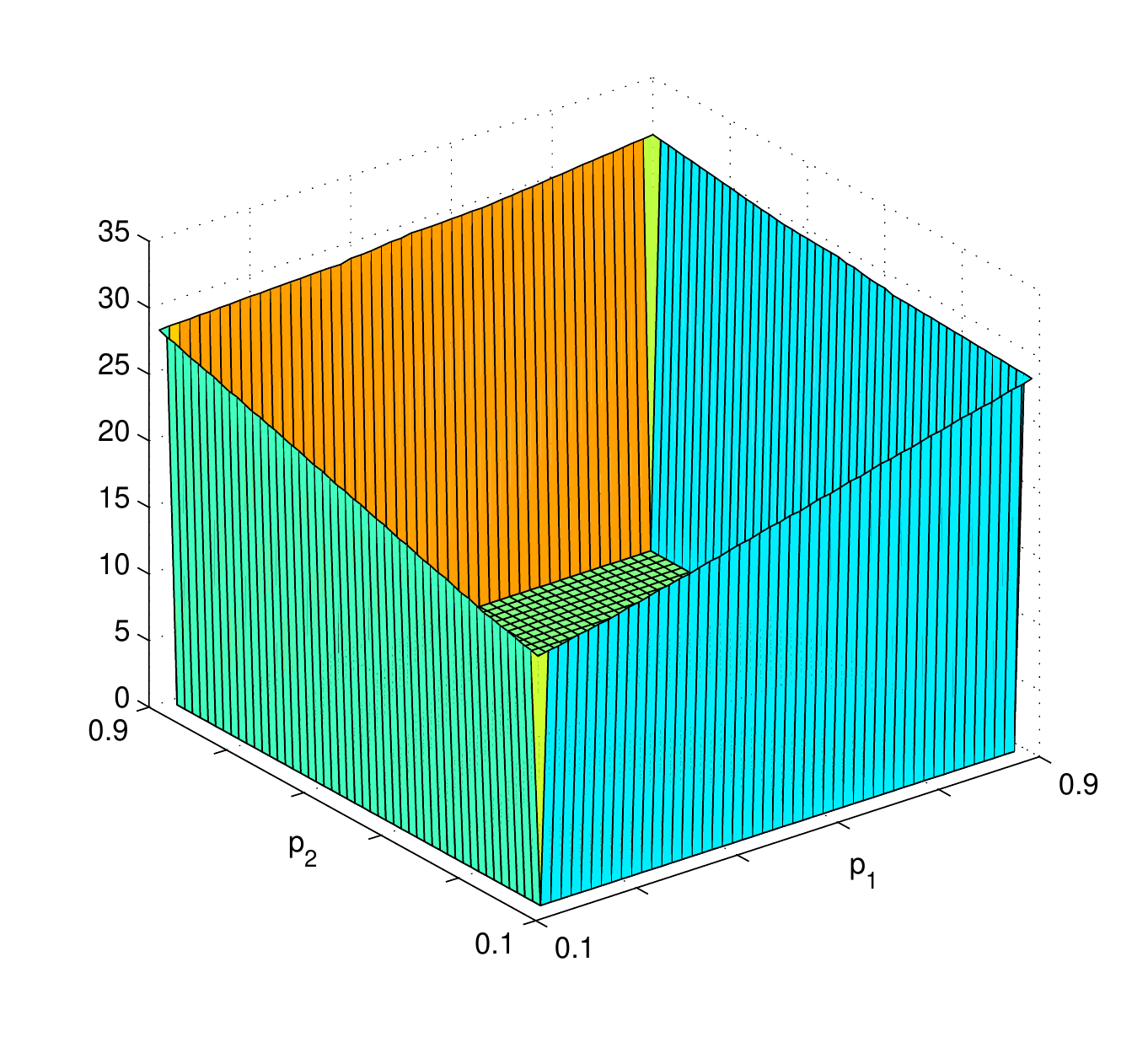}
  \caption{Value function.}
  \label{fig:value}
\end{figure}

\begin{figure}
  \centering
  \includegraphics[width=2in]{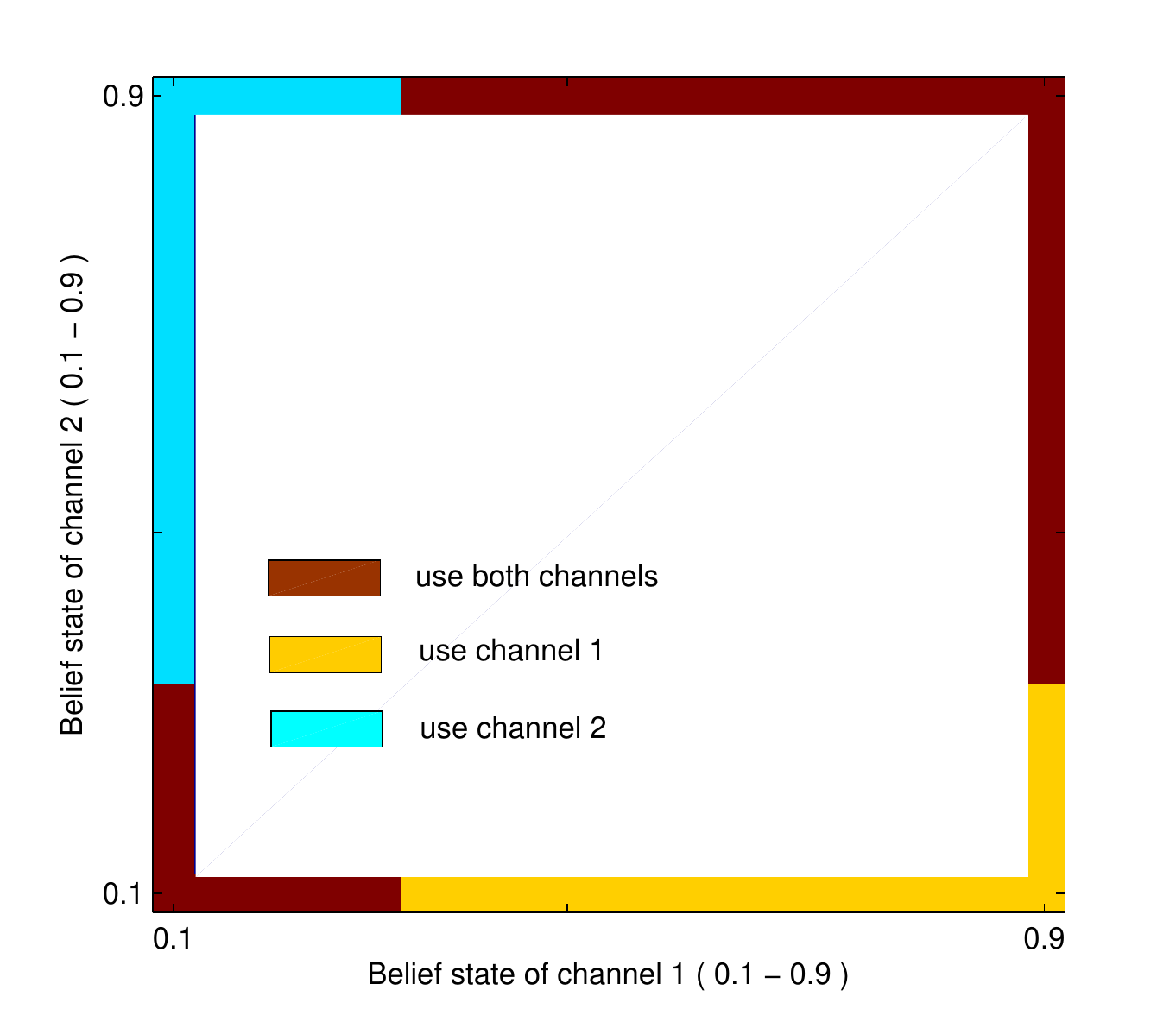}
  \caption{Optimal policy.}
  \label{fig:policy}
\end{figure}

Figure \ref{fig:value} shows the AMPL solution of value function for the following set of parameters: $\lambda_0=0.1, \lambda_1=0.9, \beta = 0.9, R_l=2, R_h=3$. The corresponding optimal policy is shown in Figure \ref{fig:policy}. The structure of the policy in Figure \ref{fig:policy} clearly shows the properties we gave in Theorems \ref{thm:contiguous} to \ref{thm:rho2}.

\begin{figure}
  \centering
  \includegraphics[width=3in]{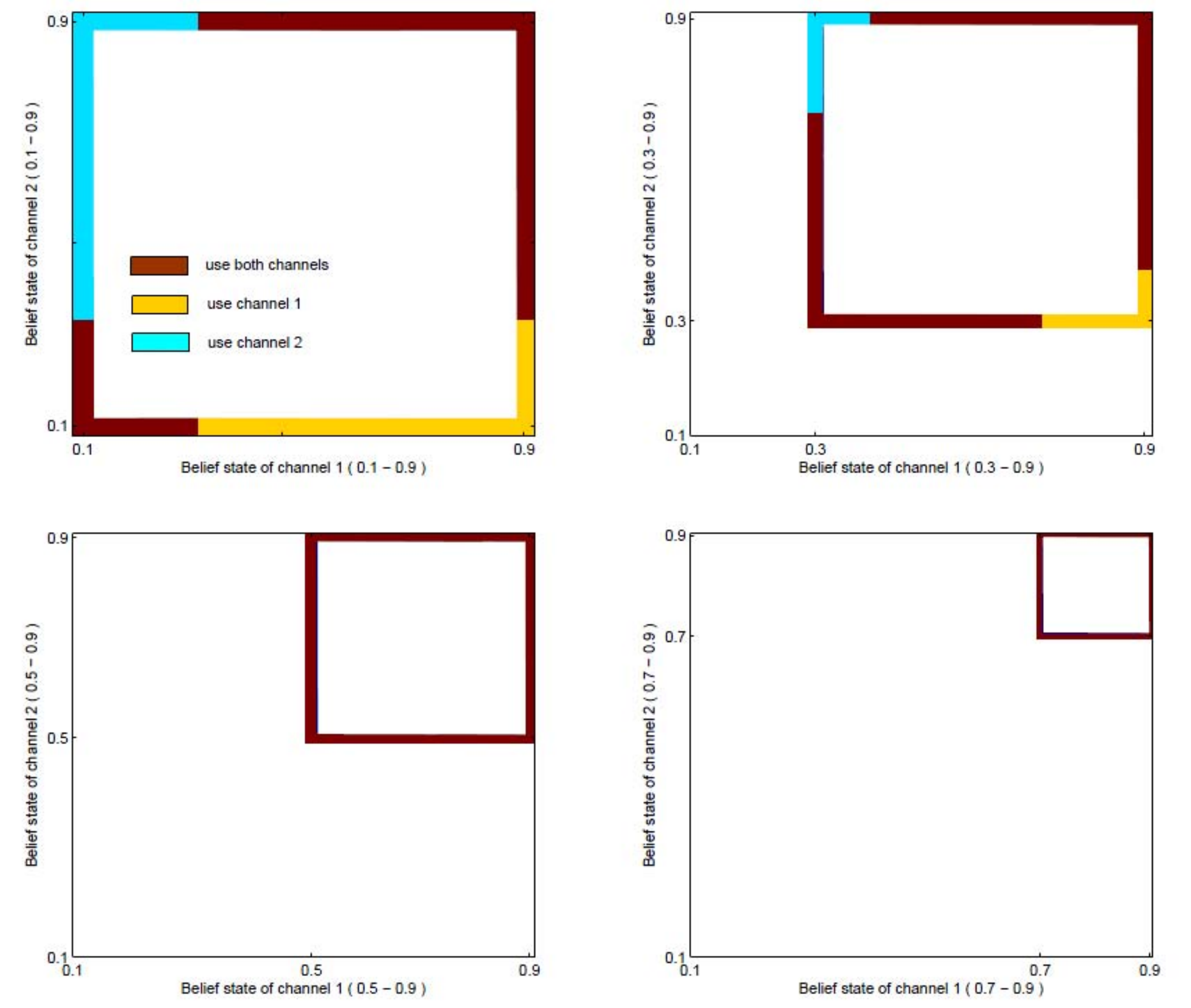}
  \caption{Optimal policy with increasing $\lambda_0$ ($R_l=2$, $R_h=3$) .}
  \label{fig:0109-0809}
\end{figure}

In order to observe the effect of parameters $\lambda_0, \lambda_1$, $R_l$ and $R_h$ on the structure of optimal policy, we have conducted simulation experiments for varying parameters. First, we increase $\lambda_0$ from 0.1 to 0.7, keeping the rest of the parameters the same as in the above experiment. Figure \ref{fig:0109-0809} shows the policy structure with different $\lambda_0$. We can observe in Figure \ref{fig:0109-0809} that when $\lambda_0$ increases from 0.1 to 0.3, the decision region of action $B_b$ occupies a bigger part of the belief space. Whilst when $\lambda_0$ is 0.5 or greater, the whole belief space falls in the decision region of action $B_b$, meaning that it is optimal to always use both channels in the set of this experiment when $\lambda_0 > 0.5$.

\begin{figure}
\centering
\includegraphics[width=3in]{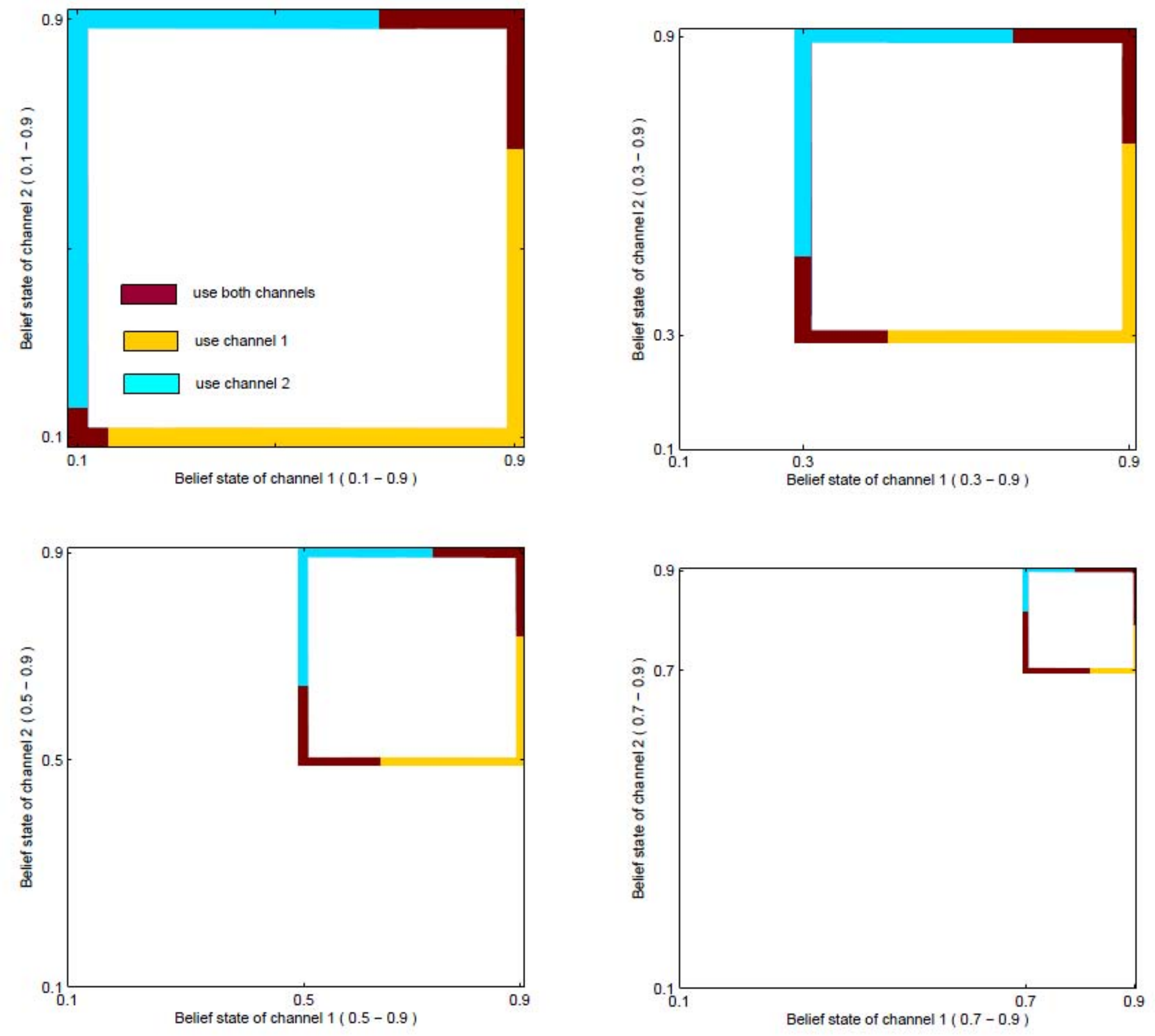}
\caption{Optimal policy with increasing $\lambda_0$ ($R_l=2, R_h=3.8$).}
\label{fig:01092038-08092038}
\end{figure}

Intuitively we believe the optimal policy is closely related to the immediate reward of the three actions. Therefore, in the next experiment, we set $R_h$ to 3.8 ($2= R_l < R_h=3.8 < 2R_l$) and repeat the above experiment, and the result is shown in Figure \ref{fig:01092038-08092038}. Compared to Figure \ref{fig:0109-0809}, the most obvious difference is that there is no zero-threshold policy structure in Figure \ref{fig:01092038-08092038}. This is because when the immediate reward of using one channel is greater (bigger $R_h$ ), it is more beneficial to use one channel than using both channels.

\begin{figure}
  \centering
   \includegraphics[width=3in]{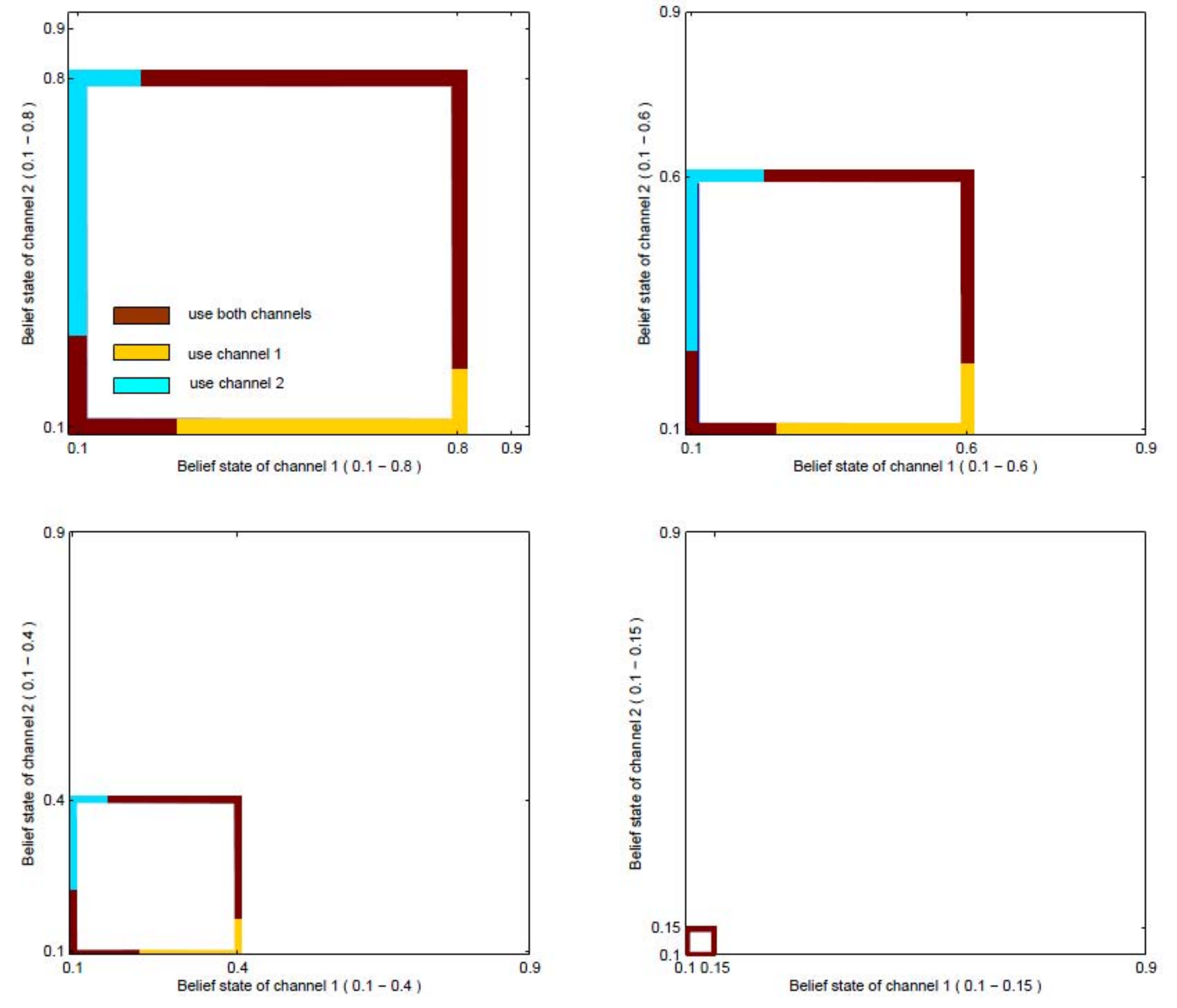}
  \caption{Optimal policy with decreasing $\lambda_1$ ($R_l=2, R_h=3.8$).}
  \label{fig:0109-0102}
\end{figure}

Figure \ref{fig:0109-0102} shows the structure of optimal policy when $\lambda_1$ decreases from 0.9 to 0.15. Other parameters in this experiment are: $\lambda_0=0.1, R_l=2,R_h=3, \beta =0.9$. As in Figure \ref{fig:0109-0809}, both two-threshold and zero threshold policies are observed in this experiment.

\begin{figure}
\centering
\includegraphics[width=1.7in]{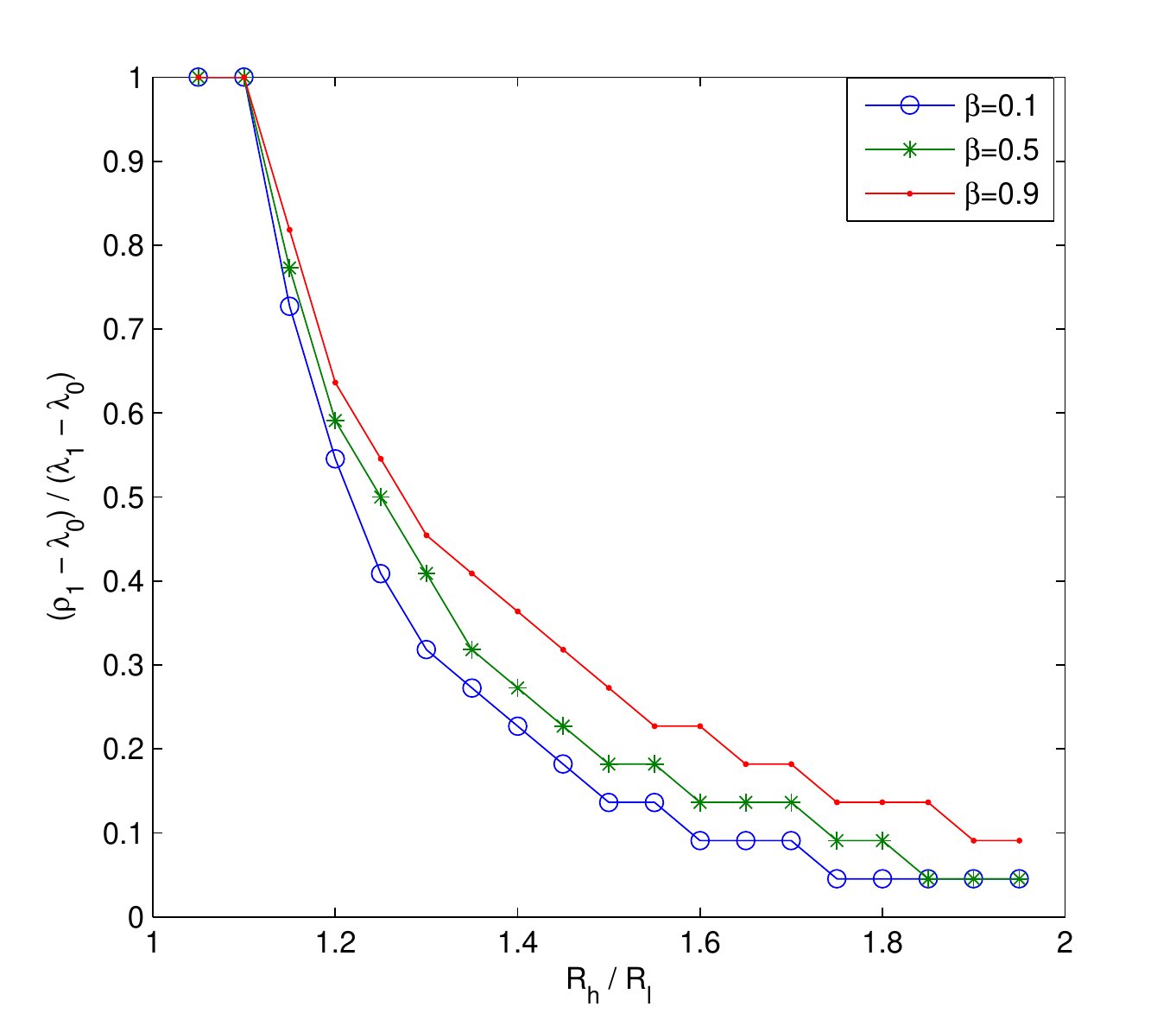}
\includegraphics[width=1.7in]{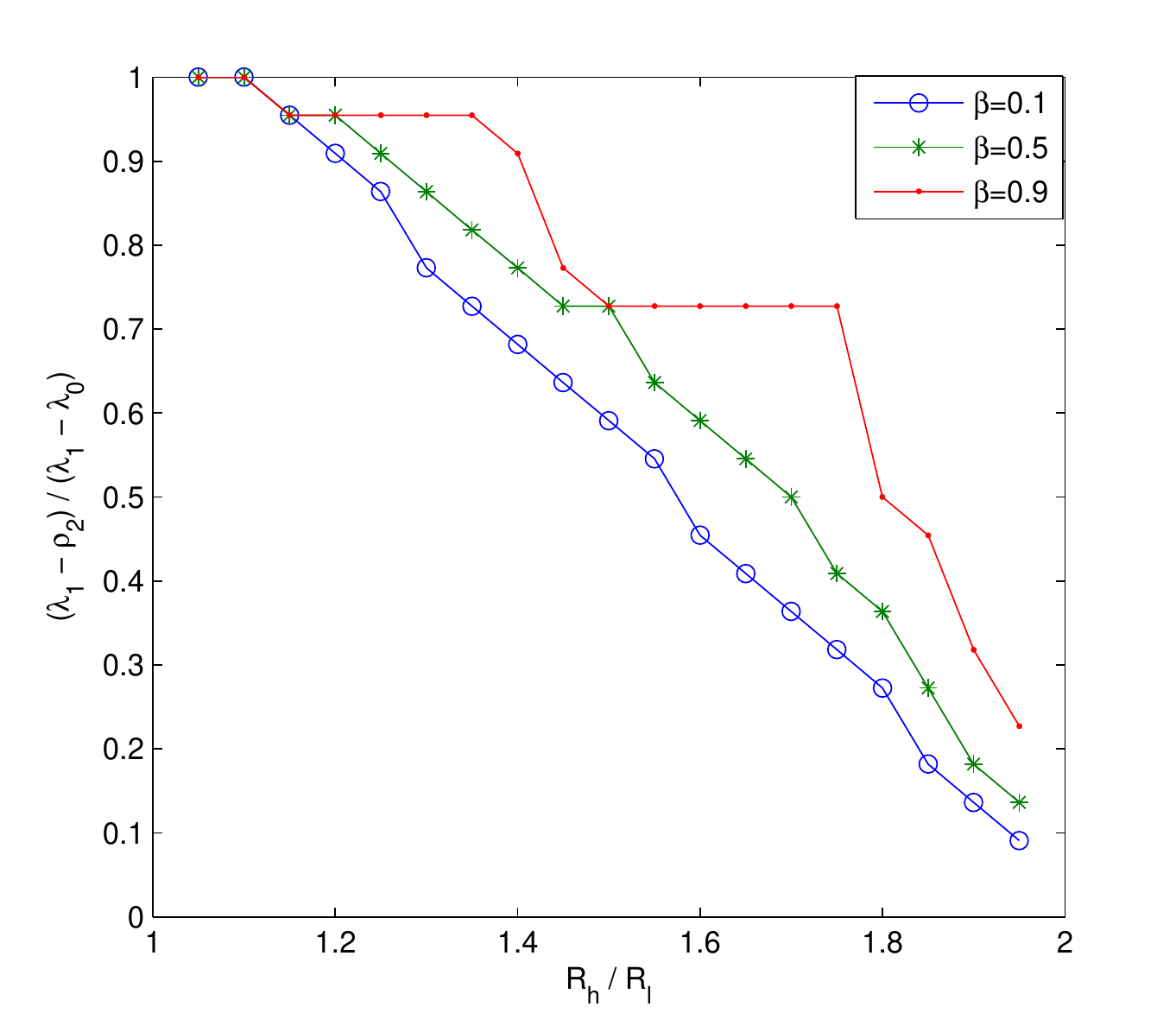}
\caption{Normalized threshold $\rho_1$ and $\rho_2$ with different $\beta$. ($\lambda_0=0.1,\lambda_1=0.9$)}
\label{fig:length1}
\end{figure}

From the above observation we understand that the thresholds are sensitive to the value of $R_l$ and $R_h$. Therefore, next we try to observe the relationship between the thresholds and the value of $R_l$ and $R_h$. Figure \ref{fig:length1} shows the normalized thresholds $\rho_1$ and $\rho_2$ versus the ratio of $R_h$ and $R_l$, with different discount factor $\beta$, when $\lambda_0=0.1, \lambda_1=0.9$. $\rho_1$ is normalized as $(\rho_1-\lambda_0)/(\lambda_1-\lambda_0)$, representing the relative length of $\Phi_{B_b}^{p_2=\lambda_0}$. Similarly, the normalized $\rho_2$, that is,  $(\lambda_1 -\rho_2)/(\lambda_1-\lambda_0)$, is the relative length of $\Phi_{B_b}^{p_2 = \lambda_1}$. It is clear to see that when the normalized threshold $\rho_1$ is 1 ($\rho_2$ is also 1), it corresponds to the zero threshold structure of the optimal policy. From Figure \ref{fig:length1} we can also observe that the structure of the optimal policy is affected by the value of the discount factor $\beta$.

\begin{figure}
\centering
\includegraphics[width=1.7in]{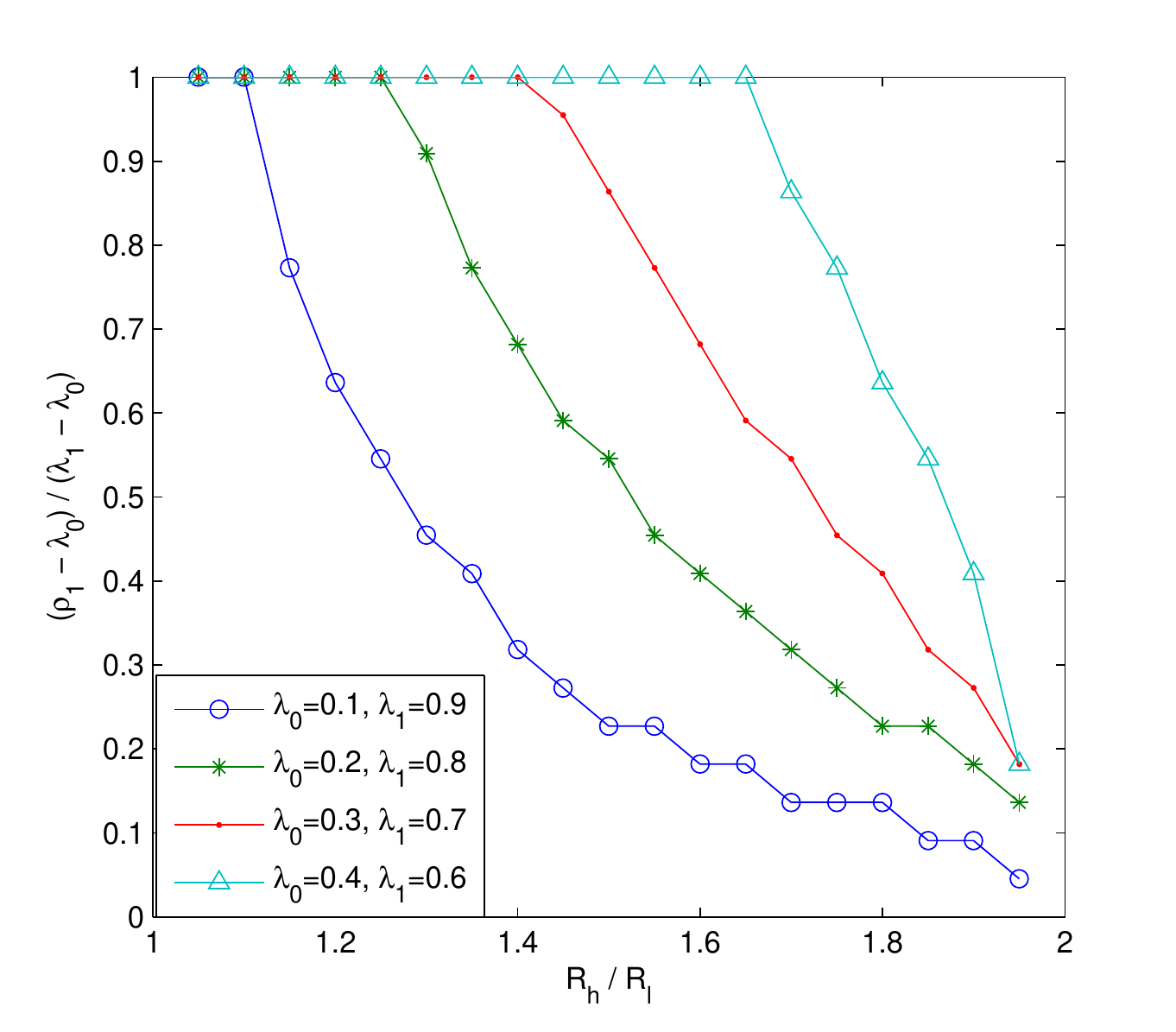}
\includegraphics[width=1.7in]{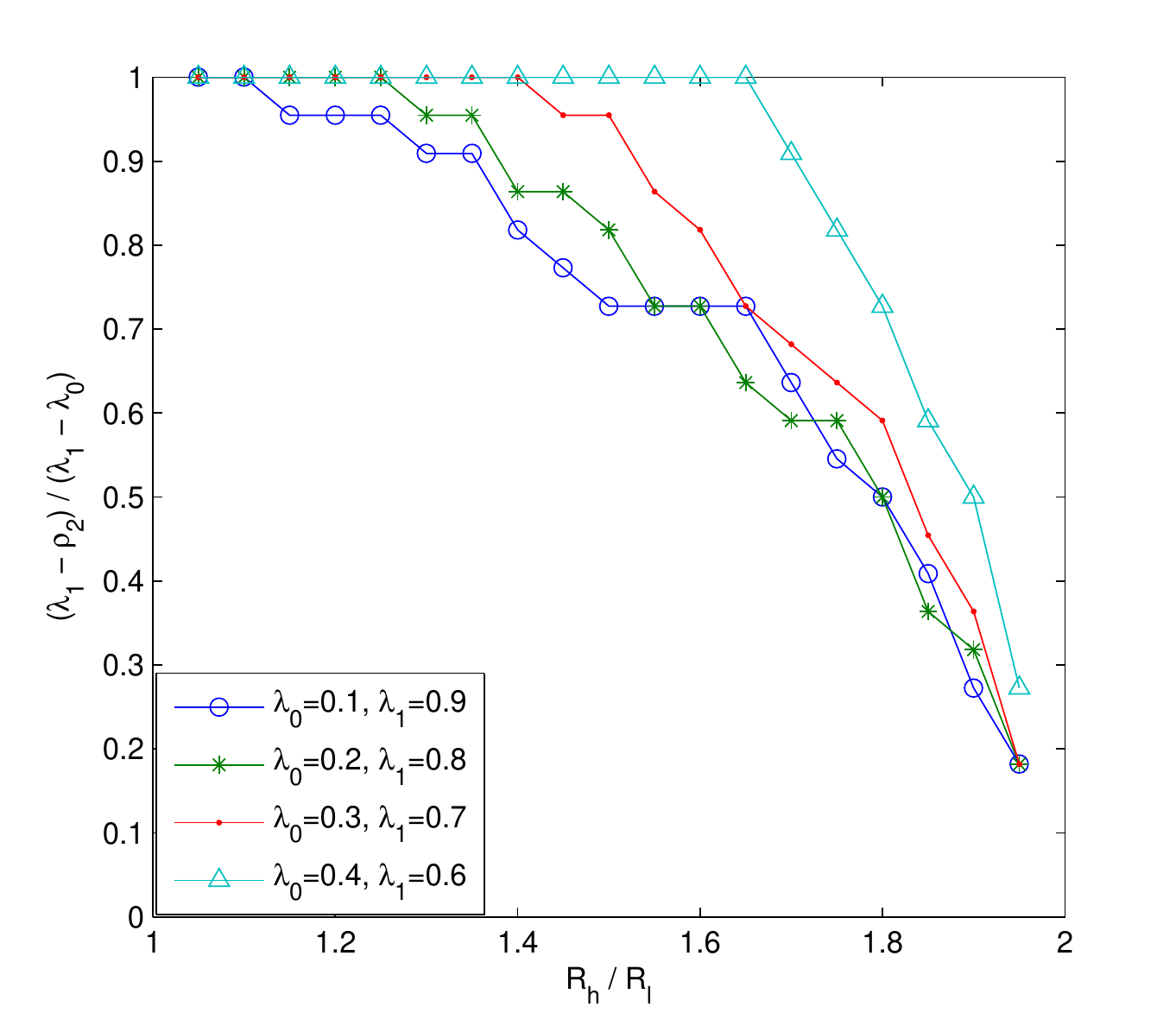}
\caption{Normalized threshold $\rho_1$ and $\rho_2$. ($\beta =0.8$)}
\label{fig:threshold4}
\end{figure}

Figure \ref{fig:threshold4} shows the normalized threshold $\rho_1$ and $\rho_2$ with different values of $\lambda_0$ and $\lambda_1$. Figure \ref{fig:threshold4} gives us a whole picture of the structure of the optimal policy for different $R_h/R_l$ and different size of the belief space. We can see that in all experiments with a wide range of parameters, no other policy structure than zero-threshold and two-threshold structure is observed. So we can conclude that with the help of linear-programming simulation, once the five parameters are known ($\lambda_0, \lambda_1, R_l, R_h, \beta$), the thresholds can be derived based on Figure \ref{fig:threshold4} and the exact optimal policy can be constructed.

\section{Conclusion}
 In this paper we have shown the structure of the optimal policy by theoretical analysis and simulation. Knowing that this problem has a 0 or 2 threshold structure reduces the problem of identifying optimal performance to finding the (only up to 2) threshold parameters. In settings where the underlying state transition matrices are unknown, this could be exploited by using a multiarmed bandit (MAB) formulation to find the best possible thresholds (similar to the ideas in the papers \cite{yanting2012} and \cite{nayyar2011}). Also, we would like to investigate the case of non-identical channels, and derive useful results for more than 2 channels, possibly in the form of computing the Whittle index~\cite{whittle1980}, if computing the optimal policy in general turns out to be intractable.

\section*{Acknowledgment}
This work is done when Junhua Tang is a visiting scholar at USC. The authors would like to thank Yi Gai and Yanting Wu for their helpful discussions. This work is partially supported by Natural Science Foundation of China under grant 61071081 and 60932003. This research was also sponsored in part by the U.S. Army Research Laboratory under the Network Science Collaborative Technology Alliance, Agreement Number W911NF-09-2-0053, and by the Okawa Foundation, under an Award to support research on ``Network Protocols that Learn".


\end{document}